\newtheorem{de}{Definition}
\newtheorem{theo}[de]{Theorem}    
\newtheorem{prop}[de]{Proposition}
\newtheorem{lem}[de]{Lemma}
\newtheorem{rem}[de]{Remark}
\newtheorem{cor}[de]{Corollary}
\DeclareMathOperator{\tr}{Tr}
\DeclareMathOperator{\loc}{loc}
\newcommand{\nn}{\nonumber}
\newcommand{\dip}{\textrm{dip}}
\newcommand{\dd}{\mathrm{d}}
\title{Derivation of the time-dependent Gross-Pitaevskii equation for dipolar gases}
\date{}
\author{Arnaud Triay}
\begin{document}
 
\begin{abstract}
	We derive the time-dependent dipolar Gross-Pitaevskii (GP) equation from the $N$--body Schrödinger equation. More precisely we show a norm approximation for the solution of the many body equation as well as the convergence of its one-body reduced density matrix towards the orthogonal projector onto the solution of the dipolar GP equation. We consider the interpolation regime where interaction potential is scaled like $N^{3\beta-1}w(N^{\beta} (x-y))$, the range of validity of $\beta$ depends on the stability of the ground state problem. In particular we can prove the convergence on the one-body density matrix assuming $\widehat{w} \geq 0$ and $\beta < 3/8$.
\end{abstract}

\maketitle
\setcounter{tocdepth}{1}
\tableofcontents
\section{Introduction}

The phenomenon of Bose-Einstein Condensation (BEC) predicted in 1924 \cite{Bose-24,Ein-24} and experimentally observed in 1995 \cite{CorWie-95}, has triggered a vast interest in the mathematical physics community. Initially analyzed for a gas of ideal particles, it has been a mathematical challenge to prove its persistence in the presence of interactions. Most of the first studies on BEC focused on the repulsive case, where the interaction is assumed to be non-negative or, at least, with a non-negative scattering length. Alongside, as experimental physicists mastered the creation process of condensates, it has been made possible to condense chemical elements with more complex interactions. In particular, the realization of dipolar BEC was achieved in 2005 for Chromium by Griesmaier et al \cite{GriWerHenStuPfa-05} and is still an active domain of research \cite{Beaufils-08,LuBurYouLev-11,Aikawa-12}. 

The nature of the dipolar interaction opened the way to a great variety of new properties such as a stable/unstable regime, the roton-maxon shape of the excitation spectrum \cite{SanShlLew-03,HenNatPoh-10} or the existence of a droplet state \cite{Chomaz-16,Baillie-16}. See \cite{LahMenSanLewPfau-09} for a survey. Yet, the dipolar interaction rarely fits the framework of the standard mathematical analysis of Bose-Einstein condensation as it is long-range and partially attractive.

In this work we give the first the derivation of the dipolar Gross-Pitaevskii (GP) equation from the $N$-body Scrhödinger dynamics. The GP equation determines the evolution of the common wave function of all the particles in the condensate. In the $3$D case and in the Gross-Pitaevskii regime, the rigorous derivation for a repulsive interaction was proven in a series of works by Erdös, Schlein and Yau \cite{ErdSchYau-06,ErdSchYau-07,ErdSchYau-09,ErdSchYau-10}.
This is the regime where the scattering length is of order $N^{-1}$, it corresponds in the $N$ body setting to a scaled interaction potential of the form $N^{2}w(N \cdot)$ for some fixed $w\in L^1(\mathbb{R}^3)$. In this regime, the scattering process plays an important role and has to be precisely taken into account. Like many other works in the subject, we will focus on an interpolation regime between the Gross Pitaevskii and the Hartree regime where the potential is scaled like $N^{3\beta-1} w(N^\beta \cdot)$  with $0<\beta < 1$. This case is easier because no significant correlation structure is expected to take place. In our setting, the difficulty comes from the attractive part of the interaction potential that may cause instabilities, as we will discuss.

Removing the non-negativity assumption on the interacting potential or studying the attractive case is a difficult task since the system may not be stable of the second kind. For this reason, the focusing case $w\leq 0$ is only globally well-posed in low dimensions ($d\leq 2$) \cite{NamNap-17d}. The derivation of the $1$D and $2$D focusing cases was provided by Chen and Holmer \cite{CheHol-16,CheHol-16b} using the BBGKY hierarchy method. It relies on compactness arguments and therefore does not give any information on the rate of convergence. Later Jeblick and Pickl \cite{JebPic-18} using a method of Pickl \cite{Pickl-11} gave another proof in the 2D case yielding a precise estimate on the rate of convergence in trace norm for the density matrices. Then Nam and Napiórkowski  \cite{NamNap-17d} obtained the norm approximation of the $N$-body wave function. This result gives the fluctuations around the condensate and implies the convergence in trace class of the density matrices. In these works, the range of $\beta$ depends on the stability of the second kind analyzed in \cite{LewNamRou-15,LewNamRou-17b}.

For the $3$D case, Pickl \cite{Pickl-10} could deal with $0<\beta< 1/6$ assuming the interaction $w$ is compactly supported, spherically symmetric and bounded. In \cite{Chong-16}, Chong gave, under the same assumptions and additionally that $w \leq 0$, another proof of the convergence of the density matrices. Later, Jeblick and Pickl \cite{JebPic-18b} proved the convergence of the density matrices in the GP regime ($\beta = 1$) for a class of non-purely non positive potentials, namely, for which one has stability of the second kind. The class of potentials treated in this last work is quite specific and does not include long-range interaction of the type considered here.

In this paper, we show, in the case of long-range interactions, the norm approximation in $L^{2}(\mathbb{R}^{3N})$ of the solution of the $N$-body Schrödinger equation by the solution of the Bogoliubov evolution as well as the convergence in trace norm of one-body reduced density matrix towards the orthogonal projector onto the solution of the dipolar GP equation. In particular, we remove the non-negativity assumption on the interaction potential and we are able to consider the dipole-dipole interaction (DDI), given by 
\begin{equation*}
K_{\dip}(x) = \frac{1-3\cos (\theta_x)^2}{|x|^3},
\end{equation*}
where $\theta_x$ is the angle between $x$ and a fixed direction along which the dipoles are aligned. The exact type of potentials we consider will be detailed later. The derivation of the Gross-Pitaevskii energy for the ground state of a Bose gas with dipolar interaction was studied in \cite{Tri-18}.

A quantitative method developed in \cite{Pickl-11} consists in applying the Grönwall lemma on the expectation of the average number of particles outside the condensate which controls the distance of the one-body density matrix to the orthogonal projection onto the solution of the Gross-Pitaveskii equation. The next order, i.e. the norm approximation of the $N$ body wave function, requires the study of the fluctuations around the condensate. To do so and following the work of Lewin, Nam, Serfaty and Solovej \cite{LewNamSerSol-13} where the authors analyzed the second order of the ground state energy of a Bose gas, one re-writes the $N$-body Schrödinger evolution in the Fock space of excitations and study their dynamics. In this new setting, one tries to verify Bogoliubov's approximation according to which the evolution of the fluctuation can be obtained by neglecting the terms of order $3$ and $4$ in creation an annihilation operators. This transformation was used in \cite{LewNamSch-15} to prove the norm approximation in the mean-field regime by the solution of the Bogoliubov's equation. In this paper, we follow the same method together with the localization method of \cite{NamNap-17d} where the idea is to use an auxiliary evolution equation defined on the restricted Fock space of at most $M$ excitations. This type of localization in the number of excitations was already present in \cite{LieSol-01,LewNamSerSol-13} and also in \cite{Pickl-11}.

\subsubsection*{\textbf{Acknowledgment}}
This project has received funding from the European Research Council (ERC) under the European Union's Horizon 2020 research and innovation programme (grant agreement MDFT No 725528 of Mathieu Lewin). Part of this work was done when the author benefited from the hospitality  of the Mittag-Leffler Institute, in Stockholm, Sweden.

\section{Setting and main result}

\subsection{The effective equation}

The purpose of this study is to prove the convergence to the dipolar Gross-Pitaevskii time-dependent equation given by
\begin{equation}\label{eq_gp}
i\partial_t \varphi(t) = \left(-\Delta + a |\varphi(t)|^2 + b K\ast |\varphi(t)|^2 -\mu(t) \right) \varphi(t),
\end{equation}
where $a \in \mathbb{R}$ accounts for the strength of the short range interaction, $b\in \mathbb{R}$ is proportional to the norm of the dipoles and
\begin{equation*}
\mu (t) = \frac{1}{2} \left(a \int_{\mathbb{R}^3} |\varphi(t)|^4 + b \int_{\mathbb{R}^3} K\ast |\varphi(t)|^2 |\varphi(t)|^2 \right).
\end{equation*}
The chemical potential $\mu(t)$ is just a phase factor that we add for convenience but that can be removed by a gauge transformation. The dipolar part is given by
\begin{equation}
	\label{K_general}
	K(x) = \frac{\Omega(x/|x|)}{|x|^3}
\end{equation}
where $\Omega\in L^q(\mathbb{S}^{2})$, for some $q\geq 2$, is a pair function satisfying the following cancellation property on $\mathbb{S}^{2}$, the unit sphere of $\mathbb{R}^{3}$,
\begin{equation}
	\label{cancellation_prop_0}
\int_{\mathbb{S}^2} \Omega(\omega) d\sigma(\omega) = 0,
\end{equation}
with $d\sigma$ denoting the Haar measure on $\mathbb{S}^{2}$. This includes the dipolar potential with $\Omega_{dip}(x) =1-3\cos^2(\theta_x)$ where $\cos(\theta_x) = n\cdot x / |x|$ and where $n$ is a fixed unit vector aligned with all the dipoles. The dipolar interaction is a large distance approximation of a system of Coulomb charges where the size of the dipoles is small compared to the distance between the dipoles. Hence, it is physically relevant to consider interaction looking like $K$ outside of a ball of fixed radius. The convolution with $K$ (in the sense of the principal values) defines a bounded operator in $L^{p}(\mathbb{R}^3)$, $1<p<\infty$ \cite{Duo01} and corresponds to the multiplication in Fourier space by some function $\widehat{K} \in L^\infty(\mathbb{R}^3)$.

\medskip

In order to simplify the computations to come, it is easier to work with the following approximate Gross-Pitaevskii equation
\begin{equation}
	\label{eq:Hartree_bis}
	\left\{
	\begin{array}{l}
	i\partial_t u_{N} = \left(-\Delta  + w_N \ast |u_{N}|^2 - \mu_N(t)\right) u_{N} \\
	u_{N}(0) = u_0,
	\end{array}
	\right.
\end{equation}
where,
\begin{equation*}
\mu_N(t) = \frac{1}{2} \iint_{\mathbb{R}^3\times \mathbb{R}^3} |u_{N}(t,x)|^2 w_N(x-y) |u_{N}(t,y)|^2 \dd x \, \dd y,
\end{equation*}
and $w_N(x) = N^{3\beta} w(N^\beta x)$ for some interaction potential $w : \mathbb{R}^3 \to \mathbb{R}$ and some $\beta > 0$. Choosing
\begin{equation}
	\label{eq:def_w}
w = w_0 + b\mathds{1}_{|x|>R}K
\end{equation}
where $w_0 \in L^1(\mathbb{R}^{3})\cap L^2(\mathbb{R}^{3})$, $b \geq 0$, $R>0$ with $a = \int_{\mathbb{R}^3} w_0$, one can show that the solutions of (\ref{eq_gp}) and (\ref{eq:Hartree_bis}) are close in $L^2$-norm as is stated in \cref{prop:regularity_hartree} below. For the truncated dipolar potential, we also have \cite{Duo01} the existence for all $1< p < \infty$ of some constant $C_p$ independent of $R>0$ such that for all $f\in L^p(\mathbb{R}^3)$, $\|(\mathds{1}_{|x|>R} K)\ast f\|_{L^p(\mathbb{R}^3)} \leq C_p \|f\|_{L^p(\mathbb{R}^3)}$.

 The regularity of the solutions of  (\ref{eq_gp}) and (\ref{eq:Hartree_bis}) has already been well studied. But since (\ref{eq_gp}) depends on $N$, one has to make sure that the Sobolev norms of the solution can be bounded independently of $N$. We do so in the following proposition which is an easy adaptation of \cite[Proposition 3.1]{BenOliSch-15}.
\begin{prop}
	\label{prop:regularity_hartree}
	Let $a,b \in \mathbb{R}$ and let $w$ satisfy (\ref{eq:def_w}), then the Cauchy problems (\ref{eq:Hartree_bis}), respectively (\ref{eq_gp}) (with initial date $u_0$), admit unique maximal solutions respectively $u_{N}, \varphi \in C^1([0,T),L^2(\mathbb{R}^3)) \cap C^0([0,T),H^1(\mathbb{R}^3))$ for some $T>0$.
	If $\widehat{w} \geq 0$ (respectively $a \geq b \inf \widehat{K}$) or if $\|\nabla u_0\|_{L^2(\mathbb{R}^3)}$ is small enough, the solutions $u_{N}$ and $\varphi$ are global in time, $T= +\infty$, and we have the following bounds
\begin{align}
	\label{eq:bound_H1}
\|u_{N}(t)\|_{H^1(\mathbb{R}^3)} + \|\varphi (t)\|_{H^1(\mathbb{R}^3)} &\leq C, \\ 
	\label{eq:bound_Hk}
\|u_{N}(t)\|_{H^k(\mathbb{R}^3)} + \|\varphi(t)\|_{H^k(\mathbb{R}^3)} &\leq C e^{C't}, \textrm{ when moreover } u_0 \in H^{k}(\mathbb{R}^3),
\end{align}
where in the last equation $C$ depends only on $\|u_0\|_{H^k(\mathbb{R}^3)}$ and $C'$ on $\|u_0\|_{H^1(\mathbb{R}^3)}$. \\
	Moreover, we have
	\begin{equation}
	\label{hyp_w_0_td}
\left| \widehat{w_0}(k) - a \right| \leq C |k|,
\end{equation}
for some constant $C>0$, where $a = \int_{\mathbb{R}^3} w_0$, and if $u_0 \in H^{2}(\mathbb{R}^3)$ then
\begin{equation}
\label{eq:CV_GP_GPbis}
\|u_{N}(t) - \varphi(t)\|_{L^2(\mathbb{R}^3)} \leq C \frac{\exp(c_1 \exp(c_2 t))}{N^{\beta}},
\end{equation}
where $C,c_1,c_2 >0$ depend on $\|u_0\|_{H^2(\mathbb{R}^3)}$.
\end{prop}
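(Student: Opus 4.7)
\emph{Proof plan.} My plan has three strands: standard contraction mapping and conservation laws for well-posedness and regularity, a one-line Taylor expansion for the bound on $\widehat{w_0}$, and an $L^2$ energy estimate for the $GP$-to-Hartree comparison exploiting the scaling structure of $w_N$.

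Local well-posedness for (\ref{eq_gp}) and (\ref{eq:Hartree_bis}) in $C([0,T),H^1(\mathbb{R}^3))\cap C^1([0,T),H^{-1}(\mathbb{R}^3))$ follows from Duhamel's formula via a contraction argument; the cubic nonlinearity is locally Lipschitz on $H^1$-bounded sets with constants independent of $N$, because
\[
\widehat{w_N}(k)=\widehat{w_0}(k/N^\beta)+b\,\widehat{\mathds{1}_{|x|>R/N^\beta}K}(k)
\]
is in $L^\infty$ uniformly in $N$ (the first summand by $\|w_0\|_{L^1}$, the second by the uniform-in-$R$ Calderón-Zygmund bound cited just above the proposition) and Sobolev $H^1\hookrightarrow L^6$ together with Hölder then yields the Lipschitz estimate. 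Mass is conserved by skew-adjointness, and the energy $\mathcal E_N[u]=\|\nabla u\|_{L^2}^2+\frac12\iint w_N(x-y)|u(x)|^2|u(y)|^2\,\dd x\,\dd y$ (analogously for $\varphi$) is conserved along the flow. When $\widehat w\geq 0$ (respectively $a\geq b\inf\widehat K$, so that $\widehat{a\delta+bK}\geq 0$), Plancherel gives non-negativity of the interaction energy, hence (\ref{eq:bound_H1}); for small $\|\nabla u_0\|_{L^2}$, the Gagliardo-Nirenberg inequality $\|u\|_{L^4}^4\leq C\|u\|_{L^2}\|\nabla u\|_{L^2}^3$ combined with the uniform $L^2\to L^2$ boundedness of $w_N*$ permits a continuity-in-$t$ bootstrap of the same bound. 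The higher bound (\ref{eq:bound_Hk}) is then obtained by differentiating $k$ times, pairing with $\nabla^k u$, applying tame Moser estimates whose coefficients depend only on the now-controlled $\|u\|_{H^1}$, and invoking Gr\"onwall. The bound (\ref{hyp_w_0_td}) is the Taylor estimate $\widehat{w_0}(k)-a=\int w_0(x)(e^{-ik\cdot x}-1)\,\dd x$ with $|e^{-ik\cdot x}-1|\leq|k||x|$, provided $xw_0\in L^1(\mathbb{R}^3)$.

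For the convergence estimate (\ref{eq:CV_GP_GPbis}), I set $r(t)=u_N(t)-\varphi(t)$ (absorbing the scalar phase difference $\mu_N-\mu$ into the comparison) and apply the $L^2$ energy identity. The self-adjoint terms cancel, and up to the analogously handled phase contribution the bound reduces to
\[
\partial_t\|r\|_{L^2}\leq C\|r\|_{L^2}+\Bigl\|\bigl(w_N*|\varphi|^2-a|\varphi|^2-bK*|\varphi|^2\bigr)\varphi\Bigr\|_{L^2},
\]
the Lipschitz piece coming from $w_N*(|u_N|^2-|\varphi|^2)$. The key decomposition is
\[
w_N-(a\delta+bK)=\bigl((w_0)_N-a\delta\bigr)-b\,h_N,\qquad h:=\mathds{1}_{|x|\leq R}K,
\]
using the scale invariance $K_N=K$. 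For the short-range part, Plancherel and (\ref{hyp_w_0_td}) applied at $k/N^\beta$ give $\|((w_0)_N-a\delta)*f\|_{L^2}\leq CN^{-\beta}\|\nabla f\|_{L^2}$. For the dipolar truncation the cancellation (\ref{cancellation_prop_0}) and the integrability $|x|h\in L^1$ yield $\widehat h(0)=0$ and $|\widehat h(k)|\leq C|k|$ globally, while $\|\widehat h\|_{L^\infty}<\infty$ follows from the uniform Calderón-Zygmund bound; splitting the Fourier integral at $|k|\sim N^\beta$ produces $\|h_N*f\|_{L^2}\leq CN^{-\beta}\|\nabla f\|_{L^2}$ as well. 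Taking $f=|\varphi|^2$, whose $H^1$ norm is controlled by $\|\varphi\|_{H^2}$ growing at most like $e^{c_2 t}$ by (\ref{eq:bound_Hk}), Gr\"onwall delivers the claimed double-exponential bound, the outer exponential being the cost of feeding the exponentially growing $\|\varphi\|_{H^2}$ into the right-hand side. The delicate point is this last $N^{-\beta}$ estimate on the dipolar truncation, which depends simultaneously on the spherical cancellation (\ref{cancellation_prop_0}) (responsible for $\widehat h(0)=0$, hence low-frequency decay) and on the uniform Calderón-Zygmund bound (controlling $\widehat h$ at high frequency); without either input the rate $N^{-\beta}$ would fail, and everything else is a routine adaptation of the standard Hartree theory.
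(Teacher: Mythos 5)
Your proposal is correct and follows essentially the same route as the paper: local well-posedness and conservation laws as in the cited references, the Fourier-side decomposition $w_N-(a\delta+bK)=\left((w_0)_N-a\delta\right)-b\left(\mathds{1}_{|x|\leq R}K\right)_N$ with multiplier bounds of size $|k|/N^{\beta}$ (which is exactly the content of (\ref{hyp_w_0_td}) and (\ref{hyp_k_td})), and a Grönwall argument whose exponentially growing coefficients produce the double exponential in (\ref{eq:CV_GP_GPbis}). The only cosmetic differences are that you use a linear bound $|\widehat{\mathds{1}_{|x|\leq R}K}(k)|\leq C|k|$ where \cref{rem:hypo_CV_potential} records the quadratic bound $CR^2|k|^2$ (both rest on the cancellation property (\ref{cancellation_prop_0}) and both yield the rate $N^{-\beta}$), and that, like the paper's own remark, you correctly flag that (\ref{hyp_w_0_td}) needs a moment condition on $w_0$ beyond $L^1\cap L^2$.
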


\begin{rem}
	\label{rem:hypo_CV_potential}
The assumption (\ref{hyp_w_0_td}) is technical and could be reduced with a trade off on the rate of convergence in (\ref{eq:CV_GP_GPbis}). This condition holds for instance as soon as $|x|^2 w_0(x) \in L^1(\mathbb{R}^3)$. Assuming the latter, the parity of $w$ actually implies (\ref{hyp_w_0_td}) with on the right-hand side $|k|^{1+\alpha}$ for all $\alpha<1$. In a similar way, we have automatically
\begin{equation}
	\label{hyp_k_td}
\left|\widehat{\mathds{1}_{|x|\leq R}K} (k) \right|\leq C R^2 k^2
\end{equation}
with a constant $C$ independent of $R$ and $k$. This can be deduced from the following formula \cite[Lemma 9]{Tri-18}
\begin{equation*}
\widehat{\mathds{1}_{|x|\leq R}K} (k) = \int_{\mathbb{S}^2} \int_0^{R} \frac{\cos(r k \cdot \omega) - 1}{r} \, \Omega(\omega) \, \mathrm{d}r \, \mathrm{d}\sigma(\omega).
\end{equation*}
\end{rem}

\begin{proof}[Proof of \cref{prop:regularity_hartree}]
The existence and uniqueness are standard, see \cite{Cazenave}, and comes from the regularity properties of the convolution with $K$ \cite{stein1970singular}. We also have from usual techniques the blow-up alternative, that is if $T<\infty$ then $\|u(t)\|_{H^1(\mathbb{R}^3)} \to \infty$ as $t\to T$. Hence if $\widehat{w} \geq 0$ then we have
\begin{equation*}
C = \int_{\mathbb{R}^3} |\nabla u_{N}(t)|^2 + \iint_{\mathbb{R}^3\times \mathbb{R}^3} |u_{N}(t)|^2 w_N\ast |u_{N}(t)|^2 \geq \|\nabla u_{N}(t)\|_{L^2(\mathbb{R}^3)}^2,
\end{equation*}
and similarly for $\varphi(t)$. From this and the blow-up alternative we deduce global existence in this case. If $\|\nabla u_0\|_{L^2(\mathbb{R}^3)}$ is small enough, it is also standard that $\|u_{N}(t)\|_{H^1(\mathbb{R}^3)}$ and $\|\varphi(t)\|_{H^1(\mathbb{R}^3)}$ have to remain bounded \cite{CarMarkSpa-08,BahGer-99}. Hence the global existence yields in this case too.

 The bounds on the growth of $\|u_{N}(t)\|_{H^k(\mathbb{R}^3)}$ and  $ \|\varphi(t)\|_{H^k(\mathbb{R}^3)} $ are obtained via the same proof of \cite[Proposition 3.1]{BenOliSch-15} where the authors only used that $w\in L^{p}(\mathbb{R}^3)$, for $p>1$ and $\widehat{w} \in L^{\infty}(\mathbb{R}^3)$. Finally, the bound on $\|u_{N}(t) - \varphi(t)\|_{L^2(\mathbb{R}^3)}$ is also obtained the same way as in \cite[Proposition 3.1]{BenOliSch-15} using (\ref{hyp_w_0_td}) and \cref{rem:hypo_CV_potential}.
\end{proof}

\subsection{Main result: derivation of the Gross-Pitaevskii equation}

We consider $N$ bosons in $\mathbb{R}^{3}$ interacting via a pair potential $w_N(x-y) := N^{3 \beta}w(N^{\beta}(x-y))$, where $w \in L^{6/5}(\mathbb{R}^3) \cap L^2(\mathbb{R}^3)$ is such that $\widehat{w}\in L^\infty(\mathbb{R}^3)$. Note that the interaction potential is possibly long-range and is allowed to be (partly) attractive. The system is entirely described at any time $t$ by its wave function $\Psi_N(t)$ evolving in $L^2(\mathbb{R}^{3})^{\otimes_s N}$, the symmetric tensor product of $N$ copies of $L^2(\mathbb{R}^{3})$, whose dynamics is given by the Schrödinger equation
\begin{equation}\label{eq:schro}
i \partial_t \Psi_N = H_N \Psi_N.
\end{equation}
Here $H_N$ is the Hamiltonian of the system given by
\begin{equation}\label{H_N}
	H_N = \sum_{j=1}^N -\Delta_{x_j} + \frac{1}{N-1} \sum_{1\leq i < j \leq N} w_N(x_i-x_j).
\end{equation}

Even though we are interested in the dynamics, the behavior of the ground state energy plays an important role. In such a system where the interaction has a negative part, proving stability of the second kind (that is, $H_N \geq -C N$ for some constant $C>0$ independent of $N$) is a difficult problem. In \cite{Tri-18}, it was proven that in the presence of an external confining potential $V(x) \geq C |x|^s$ for some $s>0$, if $w$ satisfies (\ref{eq:def_w}) and $\beta < 1/3 + s/(45+42s)$ the Hamiltonian $H_N$ is stable of the second kind. Note that with the only assumption that $\widehat{w}\geq 0$, $\widehat{w} \in L^1(\mathbb{R}^3)$ we automatically have $H_N \geq -C N$ for all $\beta\leq 1/3$.

The goal of this paper, loosely speaking, is to show that if the initial wave function $\Psi_N(0)$ is close to a product state $\varphi(0)^{\otimes N}$, then the propagated wave function $\Psi_N(t)$ remains well approximated by the product state $\varphi(t)^{\otimes N}$, where $\varphi(t)$ solves the non-linear Gross-Pitaevskii equation (\ref{eq_gp}). This \emph{approximation} is true at first order, that is we can prove that the one-body reduced density matrix of $\Psi_{N}(t)$ converges towards the orthogonal projector onto $\varphi(t)$. But this fails as a norm approximation in $L^2(\mathbb{R}^{3N})$ since $\Psi_N$ is never a pure condensate and contains fluctuations around it. The dynamics of these fluctuations is encoded in the time-dependent Bogoliubov equation that we define later on.

Before stating our main result, we recall the definition of the $k$-particle reduced density matrix of a pure state $\Psi \in L^2(\mathbb{R}^{3})^{\otimes_s N}$, 
\begin{equation}
\label{eq:def:reduced_density_matrix}
\Gamma^{(k)}_{\Psi} := \tr_{k+1 \to N} \ket{\Psi}\bra{\Psi}
\end{equation}
where $\ket{\Psi}\bra{\Psi}$ is the orthogonal projection onto $\Psi$, or in terms of kernel,
\begin{equation*}
\Gamma^{(k)}_{\Psi}(x_1,...,x_k,y_1,...,y_k) = \int \Psi(x_1,...,x_k,z_{k+1},...,z_N) \overline{\Psi(y_1,...,y_k,z_{k+1},...,z_N)} \, \mathrm{d} z_{k+1}... \, \mathrm{d}z_N.
\end{equation*}

In the following we denote by $\Psi_N(t,x)$ the solution to (\ref{eq:schro}) and $u_{N}(t,x)$ the solution to (\ref{eq:Hartree_bis}).
We will denote by $\mathcal{H} := L^2(\mathbb{R}^3)$ the one particle space and by $\mathcal{H}^N := L^2(\mathbb{R}^3)^{\otimes_s N}$ the $N$ particle bosonic space. We define $P(t) = \ket{u_{N}(t)}\bra{u_{N}(t)}$ the orthogonal projector onto $u_{N}(t) \in L^2(\mathbb{R}^3)$ and $Q (t)= 1 - P(t)$.

To describe the excitations orthogonal to the condensate, we follow the technique of \cite{LewNamSerSol-13,LewNamSch-15}. Let us denote by $\mathcal{H}_+ = \{u_N\}^\perp$ the orthogonal space of $u_N$ in $\mathcal{H}$, then note that the $N$-body wave function $\Psi_N$ admits the unique decomposition
\begin{equation*}
\Psi_N = u_N^{\otimes N} \varphi_0 + u_N^{\otimes N-1} \otimes_s \varphi_1 + u_N^{\otimes N-2} \otimes_s \varphi_2 + ... + \varphi_N
\end{equation*}
with $\varphi_k \in \mathcal{H}_+^{\otimes_s k}$ for all $k\geq 0$ with the convention that $\varphi_0 \in \mathbb{C}$. The above decomposition allows to define the unitary map 
\begin{equation}
	\label{def:U_N}
	\begin{array}{ccc}
		U_N : &\mathcal{H}^N \longrightarrow &\mathcal{F}(\mathcal{H}_+) \\
		&\Psi_N \longmapsto & \Phi_N := \bigoplus_{k=0}^N \varphi_k.
	\end{array}
\end{equation}
The unitary transformation $U_N$ is a one-to-one correspondance between a $N$ particle state and its excitations orthogonal to the condensate.

In the sequel we will denote by $\Phi_N(t) = U_N(t) \Psi_N(t)$ the corresponding state describing the excitations in the truncated Fock space $\mathcal{F}^{\leq N}(\mathcal{H}_+)$. The Bogoliubov approximation consists in approximating $\Phi_N(t)$ by the solution $\Phi(t)$ of the time dependent Bogoliubov equation
\begin{equation}
	\label{eq:Bog_equation}
\left\{
\begin{array}{l}
i \partial_t \Phi(t) = \mathbb{H}(t) \Phi(t) \\
\Phi(0) = U_N(0) \Psi_N(0),
\end{array}
\right.
\end{equation}
where the Bogoliubov Hamiltonian $\mathbb{H}$ is the operator acting on the entire Fock space $\mathcal{F}(\mathcal{H})$ (not only $\mathcal{F}(\mathcal{H}_+)$) given by
\begin{equation*}
\mathbb{H}(t) = \dd \Gamma(h(t)) + \frac{1}{2} \iint_{\mathbb{R}^3 \times \mathbb{R}^3} \left(K_2(t,x,y) a^*_x a^*_y + \overline{K_2(t,x,y)}a_x a_y \right) \dd x\, \dd y\, .
\end{equation*}
Where
\begin{equation*}
h(t) = -\Delta +  w_N \ast |u_N(t,x)|^2 + Q(t) \widetilde{K}_1(t) Q(t)  - \mu_N(t) , \qquad K_2(t) = Q(t) \otimes Q(t) \widetilde{K}_2(t) \in \mathcal{H}^2,
\end{equation*}
with $\widetilde{K}_1(t) \in \mathcal{B}(\mathcal{H})$ is the operator of kernel $$\widetilde{K}_1(t)(x,y) = u_N(t,x)w_N(x-y) \overline{u_N(t,y)},$$ and $\widetilde{K}_2(t) \in \mathcal{H}^2$ is the function given by $$\widetilde{K}_2(t,x,y) =  u_N(t,x)w_N(x-y) u_N(t,y).$$ It was proven in \cite{LewNamSch-15} that the Cauchy problem (\ref{eq:Bog_equation}) is well posed when $\langle \Phi(0), \dd \Gamma(1-\Delta) \Phi(0) \rangle < \infty$ and when $u_N \in C^0([0,T),H^1(\mathbb{R}^3)) \cap C^1([0,T),H^{-1}(\mathbb{R}^3))$. Then there is unique corresponding solution of (\ref{eq:Bog_equation}), $\Phi \in C^0([0,T],\mathcal{F}(\mathcal{H})) \cap L^\infty_{\loc}([0,T],\mathcal{Q}(\dd \Gamma(1-\Delta)))$. We emphasize that even if the Cauchy problem (\ref{eq:Bog_equation}) is posed in $\mathcal{F}(\mathcal{H})$, the solution satisfies $\Phi(t) \in \mathcal{F}(\mathcal{H}_+)$ as one can verify by computing the time-derivative of the quantity $\|a(u_N(t)) \Phi(t)\|_{L^2(\mathbb{R}^{3N})}$ and observe that it vanishes.

We recall that the norm approximation of $\Psi_N(t)$ by $U_N^* \Phi(t)$ is stronger than the convergence of the one body reduced density matrix $\Gamma^{(1)}$ towards $\ket{u_N(t)}\bra{u_N(t)}$. This is why in our result below the range of validity for the parameter $\beta$ is wider when we look at the convergence of the one-body reduced density matrix.

We can now state our main result.

\begin{theo}[Main Theorem]\label{theo_1}
Let $\beta >0$ and let $w = w_0 + b\mathds{1}_{|x|>R}K$ where $w_0 \in L^1(\mathbb{R}^{3})\cap L^2(\mathbb{R}^{3})$, $b \geq 0$, $R>0$ and where $K$ is given by (\ref{K_general}). Let $u_{N}$ be a solution of the Gross-Pitaevskii equation (\ref{eq:Hartree_bis}) on some interval $[0,T)$ with $T\in \mathbb{R}_+\cup \{\infty\}$ such that (\ref{eq:bound_H1}) and (\ref{eq:bound_Hk}) hold. Let $(\Psi_N(0))_N$ be such that 
\begin{equation}
	\label{hypo_kinetic_excitations}
\tr ((-\Delta)^{1/2} Q(0) \Gamma^{(1)}_{\Psi_N(0)} Q(0)(-\Delta)^{1/2} ) \leq C_0 N^{-1}
\end{equation}
for some constant $C_0>0$ and $\Psi_N(t)$ be the solution to the Schrödinger equation (\ref{eq:schro}) with initial condition $\Psi_N(0)$. Let $\Phi(t) = (\varphi_k(t))_{k\geq 0}$ be the solution of the Bogoliubov equation (\ref{eq:Bog_equation}).
\begin{enumerate}
\item If $0 < \beta < 1/6$ then for all $0<\alpha < \min((1-6\beta)/4,(2-7\beta)/4)$ we have 
\begin{equation}
	\label{eq:cv_norm}
\left\|\Psi_N(t) - \sum_{k=0}^N u_N(t)^{\otimes k} \otimes_s \varphi_k (t) \right\|_{L^2(\mathbb{R}^{3N})}^2 \leq C_{\alpha}e^{C't} N^{-\alpha},
\end{equation}
where $C_\alpha$ depends on $\alpha$, $C_0$ and $\|u(0)\|_{H^4(\mathbb{R}^3)}$ and where $C'$ depends on $\|u(0)\|_{H^1(\mathbb{R}^3)}$.
\item If $0< \beta < 1/4$ then for all $0 < \alpha < \min( (3 - 10\beta)/4, (1-4\beta)/4)$ we have
\begin{equation}
	\label{eq:cv_trace_class}
\| \Gamma^{(1)}_{\Psi_N(t)} - \ket{u_N(t)}\bra{u_N(t)} \|_{\mathfrak{S}_1} \leq C_{\alpha}e^{C' t} N^{-\alpha},
\end{equation}
where $C_\alpha$ depends on $\alpha$, $C_0$, and $\|u(0)\|_{H^4(\mathbb{R}^3)}$ and $C'$ depends on $\|u(0)\|_{H^1(\mathbb{R}^3)}$.
\item
Let moreover assume that $\widehat{w} \geq 0$ and that $\widehat{w} \in L^1(\mathbb{R}^3)$.
\begin{enumerate}
\item
If $0<\beta < 1/5$ then we have (\ref{eq:cv_norm}) for $0< \alpha < \min((3-10\beta)/4,(1- 5\beta)/4) $.
\item
If $0<\beta < 3/8$ then we have (\ref{eq:cv_trace_class}) for $0< \alpha < \min((1-\beta)/2, (2-5\beta)/2, (3-8\beta)/4)$.
\end{enumerate}
\end{enumerate}
\end{theo}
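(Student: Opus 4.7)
The plan is to combine the Fock space fluctuation formalism of~\cite{LewNamSch-15} with the Pickl counting method~\cite{Pickl-11} and the excitation-number localization of Nam--Napi\'orkowski~\cite{NamNap-17d}. Conjugating the Schr\"odinger equation (\ref{eq:schro}) by the unitary $U_N(t)$ of (\ref{def:U_N}) produces
\begin{equation*}
i\partial_t \Phi_N(t) = \mathbb{G}_N(t) \Phi_N(t)
\end{equation*}
on $\mathcal{F}^{\leq N}(\mathcal{H}_+)$, with a generator that Wick--orders as $\mathbb{G}_N = \mathbb{R}_0 + \mathbb{H} + \mathbb{R}_1 + \mathbb{R}_2$, where $\mathbb{H}(t)$ is the Bogoliubov Hamiltonian of (\ref{eq:Bog_equation}), $\mathbb{R}_0$ is a scalar, $\mathbb{R}_1$ is cubic and $\mathbb{R}_2$ is quartic in the creation/annihilation operators on $\mathcal{H}_+$; the linear terms vanish precisely because $u_N$ solves (\ref{eq:Hartree_bis}).

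For (\ref{eq:cv_trace_class}) I quantify the depletion of the condensate by the Pickl-type functional
\begin{equation*}
\alpha_N(t) := \tfrac{1}{N}\langle \Phi_N(t), \mathcal{N}\, \Phi_N(t)\rangle = 1 - \langle u_N(t), \Gamma^{(1)}_{\Psi_N(t)} u_N(t)\rangle,
\end{equation*}
which controls $\|\Gamma^{(1)}_{\Psi_N(t)} - \ket{u_N(t)}\bra{u_N(t)}\|_{\mathfrak{S}_1} \leq C\sqrt{\alpha_N(t)}$ by a standard Cauchy--Schwarz argument, and aim for a Gr\"onwall inequality
\begin{equation*}
|\dot\alpha_N(t)| \leq C(t)\bigl(\alpha_N(t) + N^{-2\alpha}\bigr).
\end{equation*}
Only the interaction in $H_N$ contributes to $\dot\alpha_N$, and after conjugation by $U_N$ it is built from pieces of the same Wick-type as $\mathbb{R}_1$ and $\mathbb{R}_2$. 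The quartic block is tamed by the positivity $\mathbb{R}_2 \geq -CN^{-1}(\mathcal{N}+1)^2$ coming from $\widehat w \geq 0$; the cubic block $\mathbb{R}_1$, never sign-definite, must be dominated by an operator inequality of the form
\begin{equation*}
\pm \mathbb{R}_1(t) \leq \varepsilon\, \dd\Gamma(-\Delta) + C(t)\varepsilon^{-1} N^{-\delta(\beta)}(\mathcal{N}+1),
\end{equation*}
with the smallness power $\delta(\beta)$ governed by the $L^2$-size $\|w_N\|_{L^2} \sim N^{3\beta/2}$ of the potential together with the Sobolev bounds on $u_N$ provided by (\ref{eq:bound_Hk}).

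The main obstacle is that this cubic estimate forces one to propagate a uniform-in-$N$ control on the kinetic energy of excitations $\langle \Phi_N(t), \dd\Gamma(-\Delta) \Phi_N(t)\rangle$, whose smallness at $t=0$ is exactly hypothesis (\ref{hypo_kinetic_excitations}). The bootstrap is closed on the localized Fock sector $\{\mathcal{N}\leq M\}$ for a well-chosen $M=M(N)$ as in~\cite{NamNap-17d}: one shows, using higher moments of $\mathcal{N}$ produced by the a priori energy bound $H_N \geq -CN$ (available here since $\widehat{w}\geq 0$ and $\widehat{w}\in L^1$), that the tails of $\Phi_N(t)$ outside $\{\mathcal{N}\leq M\}$ are negligible, and then compares the two evolutions inside the sector by a direct Gr\"onwall argument against the Bogoliubov dynamics (\ref{eq:Bog_equation}). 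The three competing exponents $(1-\beta)/2$, $(2-5\beta)/2$ and $(3-8\beta)/4$ correspond respectively to the Hartree-to-GP error (\ref{eq:CV_GP_GPbis}), the kinetic cost of the cubic term, and the optimization of the cutoff $M$ against $\|w_N\|_{L^2}$; jointly maximizing them produces the range $\beta<3/8$ of part~(3)(b). The same scheme yields parts (1), (2) and (3)(a), the positivity of $\widehat w$ intervening only to control $\mathbb{R}_2$ and to propagate the excitation kinetic energy, which explains why the admissible range of $\beta$ widens under this additional assumption.
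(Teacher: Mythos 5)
Your outline reproduces the architecture of the actual proof (excitation map $U_N$, Bogoliubov generator plus cubic/quartic error terms, localization to the sector $\{\mathcal{N}\leq M\}$, propagation of the excitation kinetic energy starting from (\ref{hypo_kinetic_excitations})), but two of the load-bearing claims are wrong or cannot close as stated. First, you misidentify where $\widehat{w}\geq 0$ enters. Parts (1) and (2) of the theorem make no positivity assumption, yet you propose to "tame" the quartic block by $\mathbb{R}_2\geq -CN^{-1}(\mathcal{N}+1)^2$ "coming from $\widehat w\geq 0$"; with that logic parts (1)--(2) would be unreachable. In fact the quartic term is controlled for \emph{general} $w$ by the Sobolev bound $\pm w_N(x-y)\leq N^{\beta}\|w\|_{L^{3/2}}(1-\Delta_x)$, which gives $\pm R_4\leq CN^{\beta-1}\mathcal{N}\,\dd\Gamma(1-\Delta)$ and is then absorbed on the localized sector. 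What $\widehat w\geq 0$ actually buys is the improved ground-state energy bound $|e_N|\leq C(N+N^{3\beta})$ (versus $|e_N|\leq N^{3\beta+1}$ in general), which feeds into the a priori estimate $\langle\Phi_N(t),\dd\Gamma(1-\Delta)\Phi_N(t)\rangle\leq C(|e_N|+N)$ for the \emph{untruncated} fluctuation dynamics; that is the only place the range of $\beta$ widens. Relatedly, your attribution of the exponent $(1-\beta)/2$ in (3)(b) to the Hartree-to-GP error (\ref{eq:CV_GP_GPbis}) is off: the theorem is stated relative to $u_N(t)$, so that error never enters; all three exponents come from optimizing the cutoff $M$ against the two kinetic bounds $N^{\beta+\varepsilon}$ (localized/Bogoliubov) and $|e_N|+N$ (untruncated).

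Second, the Gr\"onwall inequality $|\dot\alpha_N(t)|\leq C(\alpha_N(t)+N^{-2\alpha})$ cannot close by itself: after the Cauchy--Schwarz step the cubic and quartic contributions are dominated by $\varepsilon\,\dd\Gamma(-\Delta)+\dots$, and $\dd\Gamma(-\Delta)$ is not controlled by $N\alpha_N$, so the right-hand side is not a function of $\alpha_N$. The paper never runs a Gr\"onwall on $\alpha_N$; part (2) is obtained directly from
\begin{equation*}
\Big\langle \Phi_N(t),\tfrac{\mathcal{N}}{N}\Phi_N(t)\Big\rangle \leq \|\Phi_N(t)-\Phi_{N,M}(t)\|_{L^2(\mathbb{R}^{3N})} + N^{-1}\langle\Phi_{N,M}(t),\dd\Gamma(1-\Delta)\Phi_{N,M}(t)\rangle ,
\end{equation*}
where the first term is estimated by comparing the two evolutions and the second by the localized kinetic bound $\lesssim N^{\beta+\varepsilon}$. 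To make that comparison work you still need two ingredients absent from your sketch: a lower-bound lemma of Bogoliubov type (the analogue of \cref{lem:Bog_lower_bound}) to control the off-diagonal pairing terms $\iint K_2\,a^*_xa^*_y$, which are neither sign-definite nor bounded by $\mathcal{N}$ alone, and the averaging over the localization parameter $m\in[M/2,M-3]$ that exploits the support of $[\mathcal{G}_N,\mathds{1}^{\leq m}]$ in $\{m-2\leq\mathcal{N}\leq m+2\}$; without the averaging the commutator term costs an extra factor of $M$ and destroys the stated rates. As written, the proposal asserts the conclusion of these estimates rather than proving them, and the two misattributions above would send the quantitative bookkeeping in the wrong direction.
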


\begin{rem}
As said earlier, \emph{a priori} estimates on the kinetic energy are crucial in our proof. This is why the assumption $\widehat{w} \geq 0$ allows us to extend the range of $\beta$. Assuming stability of the second kind, one could improve it again.
\end{rem}

\begin{rem}
Note that the range of $\beta$ includes regimes where the stability of the second kind is not established. In particular, when $\widehat{w}\geq 0$ we allow $0 < \beta < 3/8$ which is above the threshold $1/3$. In such a regime $\beta >1/3$, the system is very dilute since the range of the interaction is much smaller than the mean distance between particles.
\end{rem}

The rest of the paper is devoted to the proof of \cref{theo_1}.

\section{The localization method}

\subsection{Presentation of the method}
To simplify notations, we will in the sequel denote by $u_N(t)$ the solution of modified Gross-Pitaevskii equation (\ref{eq:Hartree_bis}). The Schrödinger evolution (\ref{eq:schro}) is unitarily equivalent to the following dynamics for the excitations outside the condensate. Recalling that $\Phi_N(t) = U_N(t) \Psi_N(t)$, we have
\begin{align*}
\left\{
\begin{array}{l}
i\partial_t \Phi_N(t) = \mathcal{G}_N(t) \Phi_N(t) \\
\quad \,\Phi_N(0) = U_N(0) \Psi_N(0),
\end{array}
\right.
\end{align*}
with 
\begin{align*}
\mathcal{G}_N(t) = \mathds{1}^{\leq N}\left(\mathbb{H}(t) + \mathcal{E}_N(t)\right)\mathds{1}^{\leq N} = U_N H_N U_N^*
\end{align*}
and $\mathcal{E}_N(t)$ is an error term which is given by
\begin{equation*}
\mathcal{E}_N(t) = \frac{1}{2} \sum_{j=0}^4 (R_j + R_j^*),
\end{equation*}
\begin{align*}
R_0 &= R_0^* = d\Gamma (Q(t) [w_N \ast |u_N(t)|^2 + \widetilde{K}_1(t) - \mu_N(t)]Q(t)) \frac{1-\mathcal{N}}{N-1}, \\
R_1 &= -2 \frac{\mathcal{N}\sqrt{N-\mathcal{N}}}{N-1} a(Q(t) [w_N\ast |u_N(t)|^2] u_N(t)), \\
R_2 &= \iint K_2(t,x,y) a^*_x a^*_y \dd x \, \dd y \, \left( \frac{\sqrt{(N-\mathcal{N})(N-\mathcal{N}-1)}}{N-1} - 1\right) \\
R_3 &= \frac{\sqrt{N-\mathcal{N}}}{N-1} \iiiint (1\otimes Q(t) w_N Q(t) \otimes Q(t))(x,y,x',y') \overline{u(t,x)} a^*_y a_{x'} a_{y'}  \dd x \, \dd y \,\dd x' \,\dd y' \, , \\
R_4 &= R_4^* = \frac{1}{2(N-1)} \iiiint (Q(t) \otimes Q(t) w_N Q(t) \otimes Q(t) ) (x,y,x',y') a^*_x a^*_y a_{x'} a_{y'} \dd x \, \dd y \,\dd x' \,\dd y' \,.
\end{align*}
This computation can be found in \cite{LewNamSch-15}. Besides this reformulation of the Schrödinger equation, we will use the localization method which consists in using an auxiliary dynamics localized in the truncated Fock space $\mathcal{F}^{\leq M}(\mathcal{H}_+)$ for $M = N^{1-\delta}$, for some $\delta >0$. Having an \emph{a priori} bound on the number of excitations allows to control accurately the error terms above and hence also the expectation of the kinetic energy $\dd \Gamma(1-\Delta)$, which itself controls the expectation of the  number of excitations $\mathcal{N}_+$. More precisely, the localized dynamics is given, for $1 \leq M \leq N$, by
\begin{equation}
	\label{eq:truncated_dynamics}
\left\{
\begin{array}{l}
i\partial_t \Phi_{N,M}(t) = \mathds{1}^{\leq M} \mathcal{G}_N(t) \mathds{1}^{\leq M} \Phi_{N,M}(t) \\
\quad \,\Phi_{N,M}(0) = U_N(0) \Psi_N(0).
\end{array}
\right.
\end{equation}
We have denoted by $\mathds{1}^{\leq M} := \mathds{1}\left(\mathcal{N} \leq M\right)$ the spectral projection associated to the number operator $\mathcal{N}$. The existence and uniqueness of the solution of (\ref{eq:truncated_dynamics}) follows from \cite[Theorem 7]{LewNamSch-15}. Here as well, a direct computation shows that the time derivative of $\|a(u_N(t)) \Phi_{N,M}(t)\|_{L^2(\mathbb{R}^3N)}^2$, $\|a(u_N(t)) \Phi_{N}(t)\|_{L^2(\mathbb{R}^3N)}^2$ and $\|\mathds{1}^{\leq M}\Phi_{N,M}(t)\|_{L^2(\mathbb{R}^3N)}^2$ vanish, see \cite{NamNap-17d,LewNamSch-15}, implying that $\Phi_N(t) \in \mathcal{F}(\mathcal{H}_+)$ and $\Phi_{N,M}(t) \in \mathcal{F}^{\leq M}(\mathcal{H}_+)$ for all $t\geq 0$.

\subsection{Estimate on the kinetic energy}
We start by proving that the assumption (\ref{hypo_kinetic_excitations}) is enough to bound the whole energy of $\Psi_N(t)$.
\begin{prop}
Assume (\ref{hypo_kinetic_excitations}) then for all $t \in [0,T)$
\begin{equation*}
\langle \Psi_N(t), H_N \Psi_N(t) \rangle = \langle \Psi_N(0), H_N \Psi_N(0) \rangle \leq C N.
\end{equation*}
\end{prop}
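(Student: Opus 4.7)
The plan is to combine two inputs: conservation of energy along the Schrödinger evolution, and a direct estimate of $\langle\Psi_N(0),H_N\Psi_N(0)\rangle$ using the assumption on the kinetic energy of the excitations together with \cref{prop:regularity_hartree}.

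The first identity $\langle\Psi_N(t),H_N\Psi_N(t)\rangle=\langle\Psi_N(0),H_N\Psi_N(0)\rangle$ is pure energy conservation: since $H_N$ is self-adjoint on a dense domain, $\Psi_N(t)=e^{-itH_N}\Psi_N(0)$ leaves every quadratic form of $H_N$ invariant. It thus suffices to prove the initial bound $\langle\Psi_N(0),H_N\Psi_N(0)\rangle\leq CN$. I would split $H_N=T_N+V_N$ with $T_N=\sum_{j=1}^N(-\Delta_{x_j})$ and $V_N=\frac{1}{N-1}\sum_{i<j}w_N(x_i-x_j)$ and treat the two contributions separately.

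For the kinetic part, $\langle\Psi_N(0),T_N\Psi_N(0)\rangle=N\tr\bigl(-\Delta\,\Gamma^{(1)}_{\Psi_N(0)}\bigr)$. Writing $\mathds{1}=P(0)+Q(0)$ with $P(0)=\ket{u_N(0)}\bra{u_N(0)}$ and expanding $\Gamma^{(1)}=P\Gamma^{(1)}P+P\Gamma^{(1)}Q+Q\Gamma^{(1)}P+Q\Gamma^{(1)}Q$, the diagonal $PP$ term equals $\langle u_N(0),\Gamma^{(1)}u_N(0)\rangle\|\nabla u_N(0)\|_{L^2}^2$, which is bounded by $\|u_N(0)\|_{H^1}^2\leq C$ thanks to (\ref{eq:bound_H1}). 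The diagonal $QQ$ term is exactly $\tr\bigl((-\Delta)^{1/2}Q\Gamma^{(1)}Q(-\Delta)^{1/2}\bigr)\leq C_0/N$ by assumption (\ref{hypo_kinetic_excitations}). The two cross terms are then controlled by Cauchy--Schwarz on trace-class operators: with $A=(-\Delta)^{1/2}P\Gamma^{1/2}$ and $B=\Gamma^{1/2}Q(-\Delta)^{1/2}$ one has $\tr((-\Delta)P\Gamma^{(1)}Q)=\tr(AB)$ and $\tr(A^*A)\tr(B^*B)\leq C\cdot C_0 N^{-1}$, so the cross contribution is $O(N^{-1/2})$. Altogether $\tr(-\Delta\,\Gamma^{(1)}_{\Psi_N(0)})\leq C$ and $\langle T_N\rangle_0\leq CN$.

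For the interaction, the goal is to establish the per-pair bound $\langle\Psi_N(0),w_N(x_1-x_2)\Psi_N(0)\rangle\leq C$, which after multiplication by $\binom{N}{2}/(N-1)=N/2$ yields $\langle V_N\rangle_0\leq CN$. I would decompose the two-particle reduced density matrix using $\mathds{1}\otimes\mathds{1}=(P+Q)\otimes(P+Q)$. The fully condensed piece $P\otimes P\,\Gamma^{(2)}\,P\otimes P$ contributes $\langle u_N\otimes u_N,\Gamma^{(2)}u_N\otimes u_N\rangle\cdot\int w_N(|u_N|^2\ast|u_N|^2)$, and the product-state integral is bounded via Plancherel by $\|\widehat w\|_{L^\infty}\|u_N\|_{L^4}^4$, which is $O(1)$ thanks to $\widehat w\in L^\infty$ (by the hypotheses on $w$ and the $L^p$-boundedness of the principal value of $K$ recalled after (\ref{K_general})) and the embedding $H^1\hookrightarrow L^4$ together with (\ref{eq:bound_H1}). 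The remaining terms all contain at least one projector $Q(0)$, so they are controlled by the excitation kinetic-energy assumption (\ref{hypo_kinetic_excitations}) combined with the uniform-in-$N$ operator bound $\|w_N\ast\cdot\|_{L^2\to L^2}\leq\|\widehat w\|_{L^\infty}$.

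\textbf{The main obstacle} is precisely this last step. A naive form bound of $w_N(x_1-x_2)$ by the two-body kinetic energy, based on the three-dimensional Sobolev inequality, produces a constant proportional to $\|w_N\|_{L^{3/2}}=N^{\beta}\|w\|_{L^{3/2}}$, which would give only $\langle V_N\rangle_0\leq CN^{1+\beta}$. To avoid this loss one must genuinely exploit the Fourier-side boundedness $\widehat w\in L^\infty$ and the fact that, by (\ref{hypo_kinetic_excitations}), $\Psi_N(0)$ has most of its mass on the condensate, so that the $Q\otimes Q$ and mixed contributions to $\Gamma^{(2)}$ are small in the correct sense (the excitation kinetic energy, not merely in trace norm). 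Once this refined bound is in place, combining it with $\langle T_N\rangle_0\leq CN$ yields $\langle\Psi_N(0),H_N\Psi_N(0)\rangle\leq CN$, and energy conservation propagates the estimate to all $t\in[0,T)$.
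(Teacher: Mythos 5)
Your first two steps (energy conservation, and the kinetic bound via the $P(0)/Q(0)$ decomposition of $\Gamma^{(1)}$ with Cauchy--Schwarz on the cross terms) are correct and essentially identical to the paper's. The proposal fails, however, at the point you yourself flag: the bound $\langle\Psi_N(0),w_N(x_1-x_2)\Psi_N(0)\rangle\leq C$ for the terms carrying a $Q(0)$ is never actually proved, and the mechanism you gesture at is not the right one. The $L^2\to L^2$ boundedness of the convolution operator $f\mapsto w_N\ast f$ says nothing about the two-body multiplication operator $w_N(x_1-x_2)$ tested against $Q\otimes\mathds{1}\,\Gamma^{(2)}$ or $Q\otimes Q\,\Gamma^{(2)}$; these are different objects, and $\widehat w\in L^\infty$ only helps for the fully condensed piece (where your Plancherel argument is fine). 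So the heart of the proposition is missing.

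The way the paper closes this is worth internalizing, because it inverts your logic. First, it decomposes $w_N(x-y)$ with $P(0)\otimes\mathds{1}$ and $Q(0)\otimes\mathds{1}$ in \emph{one} variable only, giving four terms. The term $P(0)\otimes\mathds{1}\,w_N\,P(0)\otimes\mathds{1}$ is the multiplication operator $w_N\ast|u_N|^2(y)$, which is form-bounded by $C\|w_N\ast|u_N|^2\|_{L^{3/2}}(-\Delta_y)$ with an $N$-\emph{independent} constant — not by Plancherel, but because $\|w_N\ast|u_N|^2\|_{L^{3/2}}\leq(\|w_0\|_{L^1}+C)\|u_N\|_{L^3}^2$ by Young's inequality plus the $R$-uniform Calder\'on--Zygmund bound for $\mathds{1}_{|x|>R}K$ on $L^{3/2}$. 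For the remaining terms the paper \emph{does} use the ``naive'' Sobolev bound you reject, $Q(0)\otimes\mathds{1}\,|w_N|\,Q(0)\otimes\mathds{1}\leq\|w_N\|_{L^{3/2}}\,Q(0)\otimes\mathds{1}(-\Delta_x)Q(0)\otimes\mathds{1}$ with $\|w_N\|_{L^{3/2}}=N^{\beta}\|w\|_{L^{3/2}}$; the point is that this loss is harmless precisely where it occurs, because after multiplying by $N$ and tracing, assumption (\ref{hypo_kinetic_excitations}) gives $N\cdot N^{\beta}\cdot\tr(Q(0)(-\Delta)Q(0)\Gamma^{(1)})\leq C_0N^{\beta}\ll N$ for $\beta<1$. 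The cross terms are split by operator Cauchy--Schwarz with a weight $\eta$, producing an $\eta\,|w_N|\ast|u_N|^2(y)$ piece whose $L^{3/2}$ norm grows only like $N^{3\beta\varepsilon/(1+\varepsilon)}$ and an $\eta^{-1}Q\otimes\mathds{1}(-\Delta_x)Q\otimes\mathds{1}$ piece absorbed as above; optimizing $\eta$ gives $\langle V_N\rangle_0\leq CN$. In short: the assumption (\ref{hypo_kinetic_excitations}) is not used to show the $Q$-contributions are ``small in a refined Fourier sense,'' but simply to pay for the $N^{\beta}$ Sobolev loss with a factor $N^{-1}$. Your proposal, as written, does not contain this (or any) closing argument, so the gap is genuine.
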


\begin{proof}
The equality follows from differentiating $\langle \Psi_N(t), H_N \Psi_N(t) \rangle$ and the use of (\ref{eq:schro}). We focus on proving the inequality.
 By the Cauchy-Schwartz inequality we have that
\begin{align}
\tr_{\mathcal{H}} (-\Delta \Gamma^{(1)}_{\Psi_N}) \nn
	&\leq 2 \tr_{\mathcal{H}} (P(0) (-\Delta) P(0) \Gamma^{(1)}_{\Psi_N}) + 2 \tr_{\mathcal{H}} (Q(0) (-\Delta) Q(0) \Gamma^{(1)}_{\Psi_N}) \nn \\
	&\leq 2 \|\nabla u(0)\|_{L^2(\mathbb{R}^3)}^2 + 2 \frac{C_0}{N} \leq C,
		\label{ineq:borne_kinetic_N_body}
\end{align}
where we have used assumption (\ref{hypo_kinetic_excitations}).
Similarly
\begin{align*}
w_N(x-y) &= P(0)\otimes1 w_N(x-y) P(0)\otimes1 + P(0) \otimes1 w_N(x-y) Q(0) \otimes1 \\
&\qquad \qquad \qquad \qquad \qquad  
 + Q(0) \otimes1 w_N(x-y) P (0)\otimes1 + Q(0) \otimes1 w_N(x-y) Q(0) \otimes1 \\
	&\leq w_N \ast |u_N(t)|^2 (y) + \eta |w_N| \ast |u_N(t)|^2 (y) + (1+\eta^{-1}) Q(0) \otimes1 |w_N(x-y)| Q(0) \otimes1 \\
	&\leq 2 \Big( ( \| w_N \ast |u_N(t)|^2\|_{L^{3/2}(\mathbb{R}^3)} + \eta \| w_N \ast |u_N(t)|^2\|_{L^{3/2}(\mathbb{R}^3)}) (-\Delta_y) \\
	&\qquad \qquad \qquad \qquad \qquad 
	 + (1+\eta^{-1})\|w_N\|_{L^{3/2}(\mathbb{R}^3)} Q(0)\otimes1 (-\Delta_{x}) Q(0)\otimes1\Big)
\end{align*}
for all $\eta >0$, where we used Sobolev's inequality in the last inequality. Since $w$ satisfies (\ref{eq:def_w}), we have 
$$\| w_N \ast |u_N(t)|^2\|_{L^{3/2}} \leq (\|w_0\|_{L^1(\mathbb{R}^3)} + C) \|u_N(t)\|_{L^{3}(\mathbb{R}^3)}^2 \leq C$$ 
for some constant $C>0$ depending only on $\|u(0)\|_{H^1(\mathbb{R}^3)}$ and 
$$\| |w_N| \ast |u_N(t)|^2\|_{L^{3/2}} \leq \|w_N\|_{L^{1+\varepsilon}(\mathbb{R}^3)} \|u_N\|_{L^{3-2\varepsilon'}(\mathbb{R}^3)}^2 \leq C N^{3\beta \varepsilon/(1+\varepsilon)} \|w\|_{L^{1+\varepsilon}(\mathbb{R}^3)}$$
where $\varepsilon, \varepsilon' >0$, $\varepsilon' \leq 1/2 $ are such that
\begin{equation*}
\frac{1}{1+\varepsilon} + \frac{1}{3/2-\varepsilon'} = 1 + \frac{2}{3},
\end{equation*}
and where $C>0$ is some other constant depending only  $\|u(0)\|_{H^1(\mathbb{R}^3)}$. Hence we obtain
\begin{equation*}
w_N(x-y) \leq C (1+ \eta N^{3\beta \varepsilon/(1+\varepsilon)} ) (-\Delta_y) + (1+ \eta^{-1}) Q(0)\otimes 1 (-\Delta_x) Q(0)\otimes 1,
\end{equation*}
for all $\varepsilon, \eta >0$. From this, we deduce
\begin{align*}
\langle \Psi_N H_N \Psi_N \rangle 
	&= N \left( \tr_{\mathcal{H}} (-\Delta \Gamma^{(1)}_{\Psi_N}) + \tr_{\mathcal{H}^{2}} (w_N(x-y) \Gamma^{(2)}_{\Psi_N}\right) \\
	&\leq 2 N  \tr_{\mathcal{H}} (-\Delta \Gamma^{(1)}_{\Psi_N})  + C N\tr_{\mathcal{H}^{2}} ((1+ \eta N^{3\beta \varepsilon/(1+\varepsilon)} ) (-\Delta_y)  \Gamma^{(2)}_{\Psi_N}) \\
	&\quad  + C \tr_{\mathcal{H}^{2}} ((1+ \eta^{-1} Q(0)\otimes 1 (-\Delta_x) Q(0)\otimes 1)\Gamma^{(2)}_{\Psi_N}) \\
	&\leq CN(1 + \eta N^{3\beta \varepsilon/(1+\varepsilon)})  \tr_{\mathcal{H}} (-\Delta \Gamma^{(1)}_{\Psi_N}) + (1+\eta^{-1})\tr_{\mathcal{H}} (Q(0) (-\Delta) Q(0) \Gamma^{(1)}_{\Psi_N})
\end{align*}
for all $\varepsilon, \eta >0$. Now using the assumption (\ref{ineq:borne_kinetic_N_body}) and (\ref{hypo_kinetic_excitations}) we obain
\begin{align*}
\langle \Psi_N H_N \Psi_N \rangle \leq  CN (1 + \eta N^{3\beta \varepsilon/(1+\varepsilon)}) + (1+\eta^{-1}) C_0 N^{-1}
\end{align*}
and taking $\eta = N^{-3\beta \varepsilon/(1+\varepsilon)}$ for $\varepsilon \leq 3\beta$ proves the result.
\end{proof}

Denote by
\begin{equation*}
e_N = \inf \sigma\left(\frac{1}{2}\sum_{k=1}^N -\Delta_{x_k} + \frac{1}{N-1} \sum_{1\leq j<k \leq N} w_N(x_j - x_k) \right).
\end{equation*}
A computation shows the following.
\begin{cor} For all $t \in [0,T)$,
\begin{align}
	\label{eq:bound_kinetic_true_state}
\langle \Phi_N(t), \dd \Gamma(1-\Delta) \Phi_N(t) \rangle 
	\leq 2 \langle \Psi_N(t), H_N \Psi_N(t) \rangle - 2 e_N \leq C (|e_N| + N),
\end{align}
\end{cor}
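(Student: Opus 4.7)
The plan is to translate $\langle \Phi_N, d\Gamma(1-\Delta) \Phi_N\rangle$ into a one-body trace against the reduced density matrix of $\Psi_N$, dominate this by the total $N$-body kinetic energy, and then use the variational characterization of $e_N$ to control the kinetic energy by $H_N - e_N$.

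Since $\Phi_N = U_N \Psi_N$ lives in $\mathcal{F}(\mathcal{H}_+)$, the operator $d\Gamma(1-\Delta)$ acts on $\Phi_N$ in the same way as $d\Gamma(Q(1-\Delta)Q)$, and by the standard identity relating Fock-space one-body expectations to traces against reduced density matrices (see \cite{LewNamSerSol-13,LewNamSch-15})
\begin{equation*}
\langle \Phi_N, d\Gamma(1-\Delta)\Phi_N\rangle = N \tr\bigl( Q(1-\Delta)Q\, \Gamma^{(1)}_{\Psi_N}\bigr) \leq N + N \tr\bigl( Q(-\Delta)Q\, \Gamma^{(1)}_{\Psi_N}\bigr),
\end{equation*}
where the number piece is bounded by $N \tr(Q \Gamma^{(1)}_{\Psi_N}) \leq N$.

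To bound the remaining kinetic trace by $\langle \Psi_N, K_N \Psi_N\rangle$ with $K_N = \sum_j (-\Delta_{x_j})$, I would rely on the elementary operator inequality $\pm(PAQ + QAP) \leq \epsilon\, PAP + \epsilon^{-1} QAQ$, valid for any $A \geq 0$ and $\epsilon >0$ (obtained by expanding $(\epsilon^{1/2}PA^{1/2} \mp \epsilon^{-1/2}QA^{1/2})^*(\epsilon^{1/2}A^{1/2}P \mp \epsilon^{-1/2}A^{1/2}Q) \geq 0$). With $\epsilon = 2$ and $A = -\Delta$ this yields $Q(-\Delta)Q \leq 2(-\Delta) + 2 P(-\Delta)P$ as quadratic forms. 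Since $P = |u_N\rangle\langle u_N|$, we have $P(-\Delta)P = \|\nabla u_N\|_{L^2}^2\, P$, hence $\tr(P(-\Delta)P\, \Gamma^{(1)}_{\Psi_N}) \leq \|\nabla u_N\|_{L^2}^2 \leq C$ by the uniform $H^1$-bound of \cref{prop:regularity_hartree}. Consequently
\begin{equation*}
N \tr\bigl( Q(-\Delta)Q\, \Gamma^{(1)}_{\Psi_N}\bigr) \leq 2 \langle \Psi_N, K_N \Psi_N\rangle + C N.
\end{equation*}

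To close the argument I would invoke the definition of $e_N$: since $\tfrac{1}{2} K_N + (N-1)^{-1}\sum_{i<j} w_N(x_i - x_j) \geq e_N$ as operators, one gets $H_N \geq \tfrac{1}{2} K_N + e_N$, i.e.\ $\langle \Psi_N, K_N \Psi_N\rangle \leq 2(\langle \Psi_N, H_N \Psi_N\rangle - e_N)$. Combining all three steps yields
\begin{equation*}
\langle \Phi_N, d\Gamma(1-\Delta)\Phi_N\rangle \leq C \bigl( \langle \Psi_N, H_N \Psi_N\rangle - e_N + N\bigr),
\end{equation*}
and the bound by $C(|e_N| + N)$ then follows from the preceding proposition $\langle \Psi_N, H_N\Psi_N\rangle \leq CN$ together with $-e_N \leq |e_N|$. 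The only mildly subtle point is the use of $Q(-\Delta)Q \leq 2(-\Delta) + 2P(-\Delta)P$ as a trace inequality against the unbounded operator $-\Delta$; this is legitimate because $\Gamma^{(1)}_{\Psi_N}$ sits in the quadratic form domain of $-\Delta$ thanks to hypothesis (\ref{hypo_kinetic_excitations}) and the $H^1$-bound on $u_N$.
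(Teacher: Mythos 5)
Your proposal is correct and is essentially the computation the paper leaves implicit (the text only says ``a computation shows''): pass to $N\tr(Q(1-\Delta)Q\,\Gamma^{(1)}_{\Psi_N})$ via the excitation-map identity, dominate $Q(-\Delta)Q$ by $2(-\Delta)+2P(-\Delta)P$, use the uniform $H^1$ bound on $u_N$, and control the kinetic energy through $H_N\geq \tfrac12 K_N+e_N$. The only caveat is cosmetic: your chain yields $\langle\Phi_N,\dd\Gamma(1-\Delta)\Phi_N\rangle\leq 4\langle\Psi_N,H_N\Psi_N\rangle-4e_N+CN$ rather than the displayed intermediate bound with the exact constant $2$, but this still gives the final estimate $C(|e_N|+N)$, which is the only form of (\ref{eq:bound_kinetic_true_state}) used in the rest of the paper.
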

 We will need the following two intermediate results. But first, we establish the convention that throughout the rest of the paper $C_{t,\varepsilon}$ denotes a constant that can always be bounded by $C_{\varepsilon} e^{C't}$ where $C_{\varepsilon}$ depends on $\varepsilon >0$ and $\|u(0)\|_{H^4(\mathbb{R}^3)}$ and $C'>0$ depends on $\|u(0)\|_{H^1(\mathbb{R}^3)}$. It is easily verified in the proofs using \cref{prop:regularity_hartree}. In this way we can retrieve the right hand-side of the estimates of \cref{theo_1}.

\begin{prop}[Kinetic estimate for the truncated dynamics]
	\label{prop:truncated_stability}
Let $0< \beta < 1$ and $M= N^\alpha$ with $0 < \alpha < 1-\beta$, let $\varepsilon>0$, Then, for $N$ large enough, we have
\begin{equation*}
\langle \Phi_{N,M}(t) , \dd \Gamma(1-\Delta) \Phi_{N,M}(t)\rangle \leq C_{t,\varepsilon} N^{\beta + \varepsilon},
\end{equation*}
for some constant $C_{t,\varepsilon}>0$.
\end{prop}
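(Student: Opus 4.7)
The natural strategy is to set $\mathcal{K}(t) := \langle \Phi_{N,M}(t), \dd \Gamma(1-\Delta) \Phi_{N,M}(t) \rangle$, establish $\mathcal{K}(0) \le C$ from hypothesis (\ref{hypo_kinetic_excitations}) (since the kinetic energy of excitations is $N$ times the one-body quantity appearing there), and then close a Grönwall-type differential inequality of the form
\begin{equation*}
\dot{\mathcal{K}}(t) \le C_t\, \mathcal{K}(t) + C_{t,\varepsilon}\, N^{\beta+\varepsilon},
\end{equation*}
which, combined with $\mathcal{K}(0)=O(1)$, produces the desired bound $C_{t,\varepsilon} N^{\beta+\varepsilon}$. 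Because $\dd\Gamma(1-\Delta)$ commutes with $\mathcal{N}$ and hence with $\mathds{1}^{\leq M}$, and because $\mathds{1}^{\leq M}\Phi_{N,M}=\Phi_{N,M}$, the truncation projectors can be stripped and
\begin{equation*}
\dot{\mathcal{K}}(t) = i \langle \Phi_{N,M}(t), [\dd\Gamma(1-\Delta), \mathcal{G}_N(t)] \Phi_{N,M}(t)\rangle.
\end{equation*}

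The bulk of the work is then estimating the commutator term by term through the decomposition $\mathcal{G}_N = \mathbb{H} + \mathcal{E}_N$. The quadratic Bogoliubov piece $\dd\Gamma(h(t))$ contributes $\dd\Gamma([-\Delta, w_N*|u_N|^2 + Q\widetilde{K}_1 Q - \mu_N])$, which by integration by parts is controlled by the $W^{1,\infty}$-norm of the potential; since $w_N*|u_N|^2$ is bounded uniformly in $N$ in $W^{1,\infty}$ (using \cref{prop:regularity_hartree} and the $L^{6/5}\cap L^2$ hypothesis on $w$ together with the $H^k$ bounds on $u_N$), this contributes only $C\,\mathcal{K}(t)$. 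The pairing term with kernel $K_2$ and the error operators $R_0,\dots,R_4$ are handled using the scaling $\|w_N\|_{L^p} = N^{3\beta(1-1/p)}\|w\|_{L^p}$ (in particular $\|w_N\|_{L^{3/2}}\lesssim N^{\beta}$, and $\|w_N\|_{L^{1+\varepsilon}}\lesssim N^{\beta\cdot 3\varepsilon/(1+\varepsilon)}$) combined with the operator inequality
\begin{equation*}
\pm\, w_N(x-y) \le C\,\|w_N\|_{L^{3/2}(\mathbb{R}^3)}(1-\Delta_x),
\end{equation*}
the Cauchy–Schwarz inequality for creation/annihilation operators, and the \emph{a priori} bound $\mathcal{N}\le M = N^{1-\delta}$ on $\mathcal{F}^{\leq M}(\mathcal{H}_+)$. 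The error term $R_0$ is purely number-theoretic and gives $O(M/N)\,\mathcal{K}(t)$; $R_1$ contains a single annihilation operator paired with the bounded function $Q(w_N*|u_N|^2)u_N$, yielding via Cauchy–Schwarz a bound $\leq \sqrt{\mathcal{N}} \cdot C \leq \varepsilon \mathcal{K}(t) + C/\varepsilon$; the $R_2$ term is $K_2$ multiplied by an operator of order $\mathcal{N}/N$ and behaves like the Bogoliubov quadratic part with a small prefactor; the cubic term $R_3$ and the quartic term $R_4$ are the delicate ones.

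The main obstacle lies precisely in estimating the commutator of $\dd\Gamma(1-\Delta)$ with $R_3$ and $R_4$, because these contain three and four creation/annihilation operators integrated against $w_N$ which scales badly. After moving the Laplacian inside the integrals, one gets expressions of the type
\begin{equation*}
\tfrac{1}{N-1}\,\bigl\langle \Phi_{N,M},\iiiint (Q\otimes Q)\bigl[w_N,(-\Delta_x-\Delta_y)\bigr](Q\otimes Q)\,a_x^* a_y^* a_{x'} a_{y'}\,\bigr\rangle ;
\end{equation*}
one splits the commutator using an $L^{3/2+\varepsilon'}$ bound on $w_N$ (so as to gain an arbitrarily small power of $N$ instead of just $N^\beta$) and closes the expectation using $\mathcal{N}\le M$ together with $\langle \Phi_{N,M},\dd\Gamma(-\Delta)\Phi_{N,M}\rangle = \mathcal{K}(t)$; the factor $1/(N-1)$, combined with $M\leq N^{1-\beta}$ (which is exactly the assumption $\alpha<1-\beta$), ensures the resulting source term is bounded by $C_{t,\varepsilon}N^{\beta+\varepsilon}$ and absorbable, while the remaining pieces contribute $C_t \mathcal{K}(t)$. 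The cubic term $R_3$ is treated similarly, the extra factor $\sqrt{N-\mathcal{N}}/(N-1)$ again producing a source of the right size. Putting the estimates together and applying Grönwall's lemma concludes the proof.
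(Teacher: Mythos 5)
Your overall Grönwall strategy is the right general idea, but the specific quantity you choose to differentiate creates a gap that I do not see how to close. By differentiating $\mathcal{K}(t)=\langle \Phi_{N,M}(t),\dd\Gamma(1-\Delta)\Phi_{N,M}(t)\rangle$ directly you are forced to estimate $[\dd\Gamma(1-\Delta),\mathcal{G}_N(t)]$, and every term of $\mathcal{G}_N$ that contains the bare kernel $w_N(x-y)$ (the pairing term $K_2$, and $R_2$, $R_3$, $R_4$) then produces the commutator $[-\Delta_x-\Delta_y,\,w_N(x-y)] = -(\Delta w_N)-2\nabla w_N\cdot(\nabla_x-\nabla_y)$. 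This cannot be "split using an $L^{3/2+\varepsilon'}$ bound on $w_N$" as you suggest: the commutator involves \emph{derivatives} of $w_N$, not $w_N$ itself. Under the paper's hypotheses ($w=w_0+b\mathds{1}_{|x|>R}K$ with $w_0\in L^1\cap L^2$, $\widehat w\in L^\infty$) there is no control whatsoever on $\nabla w$: the dipolar part satisfies $\nabla K\sim |x|^{-4}$, which is not locally integrable, and the sharp cut-off $\mathds{1}_{|x|>R}$ adds a surface singularity. Even for smooth compactly supported $w$ the scaling kills you: $\nabla w_N=N^{4\beta}(\nabla w)(N^\beta\cdot)$, so for instance $\|\nabla w_N\|_{L^{3/2}}\sim N^{2\beta}\|\nabla w\|_{L^{3/2}}$, one full power of $N^{\beta}$ worse than $\|w_N\|_{L^{3/2}}\sim N^{\beta}$; the resulting source term would be at best of order $N^{2\beta+\varepsilon}$, not the claimed $N^{\beta+\varepsilon}$. (Your treatment of $\dd\Gamma([-\Delta,w_N\ast|u_N|^2])$ is fine, since there the gradient can be transferred onto $|u_N|^2$; the problem is exclusively the terms where $w_N$ is not convolved against the condensate in both variables.)

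The paper avoids this commutator entirely by applying Grönwall not to the kinetic energy but to the energy-type observable $A(t):= C_{t,\varepsilon}\mathds{1}^{\leq M}(\mathcal{N}+N^{\beta+\varepsilon})+\mathds{1}^{\leq M}\mathcal{G}_N(t)\mathds{1}^{\leq M}$. Lemmas \ref{lem:stability_H} and \ref{lem:Bog_approx} give the coercivity $A(t)\geq\tfrac12\mathds{1}^{\leq M}\dd\Gamma(1-\Delta)$ (here the restriction $M=N^{\alpha}$ with $\alpha<1-\beta$ enters), so bounding $\langle A(t)\rangle$ suffices. The point is that $\mathds{1}^{\leq M}\mathcal{G}_N\mathds{1}^{\leq M}$ commutes with the generator of the truncated dynamics, so $\tfrac{\dd}{\dd t}\langle\Phi_{N,M}(t),A(t)\Phi_{N,M}(t)\rangle$ only involves $\partial_t\mathbb{H}$, $\partial_t\mathcal{E}_N$ and $[\mathcal{G}_N,\mathcal{N}]=[\mathbb{H},\mathcal{N}]+[\mathcal{E}_N,\mathcal{N}]$ — the commutator with $\mathcal{N}$ merely reweights the off-diagonal terms and never differentiates $w_N$ — and all of these are again controlled by $A(t)$ via the same two lemmas. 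If you want to salvage your write-up, you should replace your Lyapunov functional by this (almost conserved) energy and derive the kinetic bound from the lower bound $A(t)\gtrsim\dd\Gamma(1-\Delta)$ rather than attack $\dd\Gamma(1-\Delta)$ head on.
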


\begin{prop}[Kinetic estimate for the Bogoliubov dynamics]
	\label{prop:Bog_stability}
Let $0< \beta < 1$ and let $\varepsilon>0$, we have
\begin{equation*}
\langle \Phi (t) , \dd \Gamma(1-\Delta) \Phi (t)\rangle \leq C_{t,\varepsilon} N^{\beta + \varepsilon},
\end{equation*}
for some constant $C_{t,\varepsilon}>0$.
\end{prop}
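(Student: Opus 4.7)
The plan is a Grönwall argument on $F(t) := 1 + \langle \Phi(t),\dd\Gamma(1-\Delta)\Phi(t)\rangle$. First, since $\Phi(0) = U_N(0)\Psi_N(0) \in \mathcal{F}(\mathcal{H}_+)$, the identity $\langle U_N\Psi,\dd\Gamma(A)U_N\Psi\rangle = N\tr(QAQ\,\Gamma^{(1)}_\Psi)$ combined with hypothesis (\ref{hypo_kinetic_excitations}) yields $F(0) \leq 1 + C_0$. Differentiating $F$ along (\ref{eq:Bog_equation}) gives $F'(t) = i\langle \Phi(t), [\mathbb{H}(t), \dd\Gamma(1-\Delta)]\Phi(t)\rangle$, so the task is to bound this commutator by $C_t F(t) + C_{t,\varepsilon} N^{\beta+\varepsilon}$ with $C_t$ and $C_{t,\varepsilon}$ uniform in $N$: Grönwall then gives the claimed $F(t) \leq C_{t,\varepsilon}N^{\beta+\varepsilon}$.

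I would split $\mathbb{H}(t) = \dd\Gamma(h(t)) + \mathbb{B}(t)$, where $\mathbb{B}(t) := \tfrac12\iint K_2\,a^*_x a^*_y\,\dd x\,\dd y + \mathrm{h.c.}$ The first piece gives $[\dd\Gamma(h),\dd\Gamma(1-\Delta)] = \dd\Gamma([h,-\Delta])$; writing $h = -\Delta + V + Q\widetilde{K}_1 Q - \mu_N$ with $V := w_N\ast|u_N|^2$, this reduces to $[V,-\Delta] = 2\nabla V\cdot\nabla + \Delta V$ plus an analogous term for $Q\widetilde{K}_1 Q$. A Kato-type estimate bounds $|\langle \Phi, \dd\Gamma([V,-\Delta])\Phi\rangle|\leq C(\|\nabla V\|_{L^\infty}+\|\Delta V\|_{L^\infty})F(t)$; using $\widehat{w}\in L^\infty$ via $\|w_N\ast g\|_{L^\infty}\leq \|\widehat{w}\|_{L^\infty}\|\widehat{g}\|_{L^1}\leq C\|g\|_{H^2}$, together with $u_N\in H^4$ (from \cref{prop:regularity_hartree}), gives $\|\nabla V\|_{L^\infty}, \|\Delta V\|_{L^\infty}\leq C_t$ uniform in $N$. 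For the pairing piece, integration by parts in $x$ and $y$ yields
\begin{equation*}
[\mathbb{B}(t),\dd\Gamma(1-\Delta)] = -\iint\bigl((1-\Delta_x + 1-\Delta_y)K_2\bigr)(x,y)\,a^*_x a^*_y\,\dd x\,\dd y + \mathrm{h.c.},
\end{equation*}
and the standard Fock-space bound $\bigl|\langle\Phi, \iint F\,a^*a^* + \mathrm{h.c.}\,\Phi\rangle\bigr|\leq 2\|F\|_{L^2(\mathbb{R}^6)}\langle\Phi,(\mathcal{N}+1)\Phi\rangle$ reduces the matter to estimating $\|(1-\Delta_x + 1-\Delta_y)K_2\|_{L^2(\mathbb{R}^6)}$. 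The derivatives must be distributed onto $u_N$ (avoiding $\Delta w_N$, which is not in $L^2$ under the assumptions on $w$), and the $L^{1+\varepsilon}$-Young trick of (\ref{ineq:borne_kinetic_N_body}) is used to confine any $N^\beta$-growth to the additive $N^{\beta+\varepsilon}$ term.

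The main obstacle is keeping the multiplicative coefficient of $F(t)$ uniform in $N$: a direct estimate $\|K_2\|_{L^2}\sim N^{3\beta/2}$ would produce $F'(t)\lesssim N^{3\beta/2}F(t)$, and Grönwall would blow up exponentially in $N^{3\beta/2}$. The combination of the Fourier estimate on $V$ (which relies on $\widehat{w}\in L^\infty$ and the Sobolev regularity of $u_N$), the integration-by-parts trick moving derivatives onto $u_N$, and the $\varepsilon$-splitting of $\|w_N\|_{L^{1+\varepsilon}}\leq N^{3\beta\varepsilon/(1+\varepsilon)}\|w\|_{L^{1+\varepsilon}}$ are what produce the uniform multiplicative coefficient required for the Grönwall argument to yield $C_{t,\varepsilon} N^{\beta+\varepsilon}$ instead of something exponential in $N^\beta$.
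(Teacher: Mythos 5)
Your overall strategy (Grönwall on a functional dominating $\dd\Gamma(1-\Delta)$, with initial value controlled by (\ref{hypo_kinetic_excitations})) is the right shape, but the specific functional you chose does not work: the commutator $[\mathbb{H}(t),\dd\Gamma(1-\Delta)]$ cannot be controlled as you claim. The piece $[V,-\Delta]$ with $V=w_N\ast|u_N|^2$ is indeed harmless, but the exchange term $[Q\widetilde K_1 Q,-\Delta]$ and the pairing term $[\mathbb{B}(t),\dd\Gamma(1-\Delta)]$ both force spatial derivatives onto $w_N$: the kernel of $(1-\Delta_x+1-\Delta_y)K_2$ necessarily contains $u_N(x)(\Delta w_N)(x-y)u_N(y)$, and there is no integration by parts that moves a Laplacian acting on the difference variable $x-y$ onto $u_N$ alone. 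Under the standing hypotheses $w\in L^{6/5}\cap L^2$, $\widehat w\in L^\infty$ (which must cover the truncated dipolar kernel, whose gradient is not even in $L^2$), these terms are not defined; and even for smooth $w$ each derivative of $w_N$ costs a factor $N^{\beta}$, so the best available bound on $\|(1-\Delta_x)^{-1/2}(1-\Delta_x+1-\Delta_y)K_2\|_{\mathfrak{S}_2}^2$ is of order $N^{4\beta}$, far above the additive $N^{\beta+\varepsilon}$ you need. Your remark that the $L^{1+\varepsilon}$ Young trick of (\ref{ineq:borne_kinetic_N_body}) confines the growth is also misplaced: in the paper the $N^{\beta+\varepsilon}$ comes from the weighted Hilbert--Schmidt norm $\|(1-\Delta_x)^{-1/2}u_N\,w_N\ast(u_N\,\cdot)\|_{\mathfrak{S}_2}^2$ of the \emph{undifferentiated} pairing kernel, estimated by Kato--Seiler--Simon and Hölder in Schatten classes.

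The paper's proof (given for \cref{prop:truncated_stability}, and stated to be the same for \cref{prop:Bog_stability}) circumvents exactly this obstruction by applying Grönwall to
\begin{equation*}
A(t)=C_{t,\varepsilon}\left(\mathcal{N}+N^{\beta+\varepsilon}\right)+\mathbb{H}(t),
\end{equation*}
which dominates $\tfrac12\dd\Gamma(1-\Delta)$ by \cref{lem:stability_H} (whose lower bound on $\mathbb{H}+\dd\Gamma(\Delta)$ rests on \cref{lem:Bog_lower_bound} and produces the $N^{\beta+\varepsilon}$). Since $[\mathbb{H},\mathbb{H}]=0$, the time derivative of $\langle\Phi(t),A(t)\Phi(t)\rangle$ only involves $\partial_t\mathbb{H}(t)$ and $i[\mathbb{H}(t),\mathcal{N}]$ — neither of which differentiates $w_N$ in space, only $u_N$ in time — and both are again bounded by $C_{t,\varepsilon}(\mathcal{N}+\eta N^{\beta+\varepsilon}+\eta^{-1}\dd\Gamma(1-\Delta))\leq C_{t,\varepsilon}A(t)$. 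You should replace your functional $F(t)$ by $\langle\Phi(t),A(t)\Phi(t)\rangle$; the rest of your scheme (initial bound, Grönwall, recovery of the kinetic energy) then goes through.
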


The proofs of \cref{prop:truncated_stability} and \cref{prop:Bog_stability} are similar and we will only give the one of  \cref{prop:truncated_stability}. It is a consequence of the following lemmas whose proofs are postponed until the end of the proof of Proposition \ref{prop:Bog_stability}.

\begin{lem}
	\label{lem:stability_H}
Let $\beta,\varepsilon > 0$. There exists some constant $C_{t,\varepsilon} >0$ such that for $A\in \{ \mathbb{H}(t) + \dd \Gamma(\Delta), \partial_t \mathbb{H}(t), i [\mathbb{H},\mathcal{N}]\}$, we have
\begin{equation*}
\pm A\leq C_{t,\varepsilon}\left( \mathcal{N} + \eta N^{\beta+ \varepsilon} + \eta^{-1} \dd \Gamma(1-\Delta)\right),
\end{equation*}
for any $\eta >0$.
\end{lem}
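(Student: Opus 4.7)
My plan is to decompose each of the three operators $\mathbb{H}(t)+\dd\Gamma(\Delta)$, $\partial_t\mathbb{H}(t)$, $i[\mathbb{H}(t),\mathcal{N}]$ into the same two building blocks, and then estimate those separately. From the explicit formula for $\mathbb{H}(t)$,
\begin{equation*}
\mathbb{H}(t)+\dd\Gamma(\Delta) = \dd\Gamma(V(t)) + \tfrac{1}{2}\bigl(B(K_2(t))+B(K_2(t))^*\bigr),
\end{equation*}
where $V(t) = w_N*|u_N(t)|^2 + Q(t)\widetilde K_1(t) Q(t) - \mu_N(t)$ is a one-body potential and $B(K) := \iint K(x,y)\,a^*_x a^*_y \, \dd x \, \dd y$ is the pair-creation operator. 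The commutator $i[\mathbb{H},\mathcal{N}]$ reduces to a purely off-diagonal piece, since $\dd\Gamma(h)$ commutes with $\mathcal{N}$ and $[B(K_2),\mathcal{N}]=-2B(K_2)$ yield $i[\mathbb{H},\mathcal{N}]=i(B(K_2)^*-B(K_2))$; the time derivative $\partial_t\mathbb{H}$ produces analogous terms $\dd\Gamma(\partial_t V) + \tfrac{1}{2}(B(\partial_t K_2)+B(\partial_t K_2)^*)$ once $\partial_t u_N$ is replaced via the Hartree equation. Thus all three cases reduce to estimates on a "diagonal" piece and an "off-diagonal" pair piece.

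\textbf{Diagonal piece.} I first show $\|V(t)\|_{\mathcal{B}(\mathcal{H})} \leq C_t$ (and similarly for $\partial_t V$), which immediately gives $\pm\dd\Gamma(V) \leq C_t \mathcal{N}$ and feeds into the $\mathcal{N}$ contribution on the right-hand side. The Hartree potential $w_N*|u_N|^2$ is bounded in $L^\infty$ uniformly in $N$: the $w_0$-part by $\|w_0\|_{L^1}\|u_N\|_\infty^2$, while for the truncated dipolar part $(\mathds{1}_{|x|>R/N^\beta}K)*|u_N|^2$ I split off a "far" piece ($|y|>1$), where $K$ is bounded, from a "near" piece for which the cancellation $\int_{\mathbb{S}^2}\Omega=0$ lets me replace $|u_N(x-y)|^2$ by $|u_N(x-y)|^2-|u_N(x)|^2 = O(|y|)$ and integrate the resulting borderline-singular $|y|^{-2}$. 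The operator $\widetilde K_1 = M_{u_N}(w_N*\cdot)M_{\overline{u_N}}$ is bounded in $\mathcal{B}(\mathcal{H})$ by $\|u_N\|_\infty^2\|\widehat w\|_{L^\infty}$, using that convolution with $w_N$ has $L^2$-operator norm $\|\widehat w\|_{L^\infty}$ independent of $N$. All constants stay uniform thanks to the $H^k$-bounds of \cref{prop:regularity_hartree}.

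\textbf{Off-diagonal piece --- main obstacle.} This is the technical heart of the lemma and the source of the factor $N^{\beta+\varepsilon}$. The key tool is the Fock-space Cauchy-Schwarz estimate: for any positive one-body $T$ and any $\eta>0$,
\begin{equation*}
\pm\bigl(B(K_2)+B(K_2)^*\bigr) \leq \eta^{-1}\dd\Gamma(T) + \eta\, C\bigl\|(T^{-1/2}\otimes 1)K_2\bigr\|_{L^2(\mathbb{R}^6)}^2(\mathcal{N}+1),
\end{equation*}
proved by pairing the two creation operators, inserting $T^{1/2}T^{-1/2}$ on one slot, and applying Cauchy-Schwarz sector-by-sector. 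With $T=1-\Delta$, the decisive bound becomes $\|(1-\Delta_x)^{-1/2}K_2\|_{L^2(\mathbb{R}^6)}^2 \leq C_{t,\varepsilon}N^{\beta+\varepsilon}$. Since $|K_2(x,y)| \leq |u_N(x)w_N(x-y)u_N(y)|$, the Sobolev embedding $L^{6/5}\hookrightarrow H^{-1}$ gives
\begin{equation*}
\|(1-\Delta_x)^{-1/2}[u_N(\cdot)w_N(\cdot-y)]\|_{L^2_x} \leq C\|u_N\|_\infty\|w_N\|_{L^{6/5}} = C N^{\beta/2}\|w\|_{L^{6/5}}
\end{equation*}
by the scaling $\|w_N\|_{L^{6/5}} = N^{\beta/2}\|w\|_{L^{6/5}}$, and integrating $|u_N(y)|^2 \, \dd y$ closes the estimate (the $\varepsilon$ absorbs losses from sharper Hölder pairings). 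The real challenge is to \emph{avoid} the naive bound $\|K_2\|_{L^2}^2\leq C N^{3\beta}$ arising from $\|w_N\|_{L^2}^2 = N^{3\beta}\|w\|_2^2$: placing a kinetic weight on one slot is what saves exactly two powers of $N^\beta$ and makes the lemma non-trivial. Combining the Fock-space inequality with this Sobolev estimate and using $\mathcal{N}\leq\dd\Gamma(1-\Delta)$ then produces the desired form $C_{t,\varepsilon}(\mathcal{N}+\eta N^{\beta+\varepsilon}+\eta^{-1}\dd\Gamma(1-\Delta))$ after relabelling the parameter.

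\textbf{Time derivative and commutator.} For $i[\mathbb{H},\mathcal{N}]$ only an off-diagonal pair term survives, and the estimate of Step~3 applies verbatim. For $\partial_t\mathbb{H}$, using $i\partial_t u_N = (-\Delta+w_N*|u_N|^2-\mu_N)u_N$ to replace the time derivative, $\partial_t V$ is still a bounded operator in $\mathcal{B}(\mathcal{H})$ and $\partial_t K_2$ satisfies the same Sobolev-weighted $L^2$ bound as $K_2$ itself, with one extra Sobolev norm of $u_N$ absorbed into $C_t$. The uniform $H^4$ regularity of $u_N$ provided by \cref{prop:regularity_hartree} is exactly what permits $C_{t,\varepsilon}$ to depend only on $\|u_0\|_{H^4(\mathbb{R}^3)}$, completing the three cases.
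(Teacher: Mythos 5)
Your decomposition into a diagonal piece $\dd\Gamma(V)$ and an off-diagonal pair piece, and your handling of the diagonal part, the time derivative and the commutator, all match the paper's proof; the Sobolev/scaling computation giving $\bigl\|(1-\Delta_x)^{-1/2}\widetilde K_2\bigr\|_{\mathfrak{S}_2}^2\leq C_{t,\varepsilon}N^{\beta+\varepsilon}$ also identifies the right quantity (the paper obtains the same bound via H\"older in Schatten spaces and Kato--Seiler--Simon). The gap is in the inequality you use for the pair term. The bound you state,
\begin{equation*}
\pm\bigl(B(K_2)+B(K_2)^*\bigr)\leq \eta^{-1}\dd\Gamma(T)+\eta\,C\,\bigl\|(T^{-1/2}\otimes 1)K_2\bigr\|_{L^2}^2(\mathcal{N}+1),
\end{equation*}
is indeed what "pairing the creation operators and applying Cauchy--Schwarz sector-by-sector" yields, but the factor $(\mathcal{N}+1)$ is fatal here: it produces a term $\eta\,N^{\beta+\varepsilon}\mathcal{N}$ which is \emph{not} dominated by $C_{t,\varepsilon}(\mathcal{N}+\eta N^{\beta+\varepsilon}+\eta^{-1}\dd\Gamma(1-\Delta))$ with $C_{t,\varepsilon}$ independent of $\eta$ (test it on a sector with $\mathcal{N}\sim N^{\beta+\varepsilon}$ and $\eta$ of order one). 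This is not cosmetic: in the applications (\cref{prop:truncated_stability} and especially \cref{prop:Bog_stability}, where there is no cutoff on $\mathcal{N}$) the lemma is used with $\eta\sim C_{t,\varepsilon}$ so that $\eta^{-1}\dd\Gamma(1-\Delta)$ can be absorbed into the kinetic energy; an extra $N^{\beta+\varepsilon}\mathcal{N}$ would force $\eta^{-1}\gtrsim N^{\beta+\varepsilon}$, which is incompatible with keeping the constant term of size $N^{\beta+\varepsilon}$, and the Gr\"onwall closure would be lost.

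The paper avoids exactly this by invoking \cref{lem:Bog_lower_bound}, the sharpened bound adapted from Nam--Napi\'orkowski--Solovej: $\dd\Gamma(H)+\tfrac12(\mathrm{pair}+h.c.)\geq-\tfrac12\|H^{-1/2}K\|_{\mathrm{HS}}^2$ with \emph{no} number-operator factor. Its proof goes through the operator inequality $\gamma_\Psi\geq\widetilde{\alpha}_\Psi^*(1+J\gamma_\Psi J^*)^{-1}\widetilde{\alpha}_\Psi$ for the one-body density matrix and pairing operator of $\Psi$, and is genuinely stronger than sector-by-sector Cauchy--Schwarz; you need this sharper input (or an equivalent) to prove the lemma as stated. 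A secondary point: $K_2=Q(t)\otimes Q(t)\widetilde K_2$, and the pointwise bound $|K_2|\leq|u_Nw_Nu_N|$ you invoke does not commute with $(1-\Delta_x)^{-1/2}$; the paper instead expands $Q=1-\ket{u_N}\bra{u_N}$ and bounds the resulting rank-one corrections $a^*(\cdot)a^*(u_N)$ separately by $C_t\mathcal{N}$, which is where part of the $\mathcal{N}$ term on the right-hand side actually originates.
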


\begin{lem}
	\label{lem:Bog_approx}
Let $0 < \beta < 1$. For every $1 \leq m \leq N$, $\varepsilon >0$ and for $A \in \{ \mathcal{E}_N(t), \partial_t \mathcal{E}_N(t), i[\mathcal{E}_N(t),\mathcal{N}]\}$ we have
\begin{equation}
	\label{eq:lem_Bog_approx}
\pm \mathds{1}^{\leq m} A \, \mathds{1}^{\leq m} \leq C_{t,\varepsilon} \sqrt{\frac{m}{N^{1-\beta-\varepsilon}}} \dd \Gamma(1-\Delta) \, \textrm{ in } \mathcal{F}(\mathcal{H}_+).
\end{equation}
\end{lem}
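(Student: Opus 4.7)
\textbf{Proof plan for \cref{lem:Bog_approx}.} The plan is to estimate each of the five components $R_0,\ldots,R_4$ of $\mathcal{E}_N(t)$ separately (together with their adjoints), and then to handle $\partial_t\mathcal{E}_N$ and $i[\mathcal{E}_N,\mathcal{N}]$ by the same scheme. Differentiating in $t$ replaces $u_N$ by $\partial_t u_N$, which through the Hartree equation (\ref{eq:Hartree_bis}) and the uniform $H^k$-bounds of \cref{prop:regularity_hartree} (for $k\leq 4$) remains controlled inside the constant $C_{t,\varepsilon}$. For the commutator, the number-preserving pieces $R_0$ and $R_4$ commute with $\mathcal{N}$ and drop out; the remaining $R_1,R_2,R_3$ have their creation/annihilation operators reshuffled by $[a_x,\mathcal{N}]=a_x$ and $[a^*_x,\mathcal{N}]=-a^*_x$, producing operators of exactly the same algebraic type that can be treated identically.

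The common toolbox consists of: (a) the scaling $\|w_N\|_{L^p}=N^{3\beta(1-1/p)}\|w\|_{L^p}$, so that $\|w_N\|_{L^{3/2}}\leq CN^\beta$ and $\|w_N\|_{L^2}\leq CN^{3\beta/2}$; (b) the Sobolev operator inequality $|w_N(x-y)|\leq C\|w_N\|_{L^{3/2}}(1-\Delta_y)$ acting in $L^2(\mathbb{R}^3_y)$; (c) the Fock estimates $\|a^{\#}(\phi)\Phi\|\leq\|\phi\|_{L^2}\|(\mathcal{N}+1)^{1/2}\Phi\|$; (d) the localization $\mathcal{N}\leq m$ on $\mathcal{F}^{\leq m}$ combined with $\mathcal{N}\leq\dd\Gamma(1-\Delta)$; and (e) the uniform bounds on $u_N$, $w_N\ast|u_N|^2$ and the operator norms of $\widetilde{K}_1,\widetilde{K}_2$ supplied by \cref{prop:regularity_hartree}. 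A weighted Young inequality $AB+BA\leq \eta A^2+\eta^{-1}B^2$ with optimal weight $\eta=\eta(m,N,\varepsilon)$ will trade factors of $m^{1/2}$ against powers of $\dd\Gamma(1-\Delta)^{1/2}$ and absorb any residual $N^\beta$ into the $N^\varepsilon$ slack, which is exactly how the exponent $\sqrt{m/N^{1-\beta-\varepsilon}}$ appears.

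Concretely, $R_0$ has a bounded one-body kernel, so the prefactor $(1-\mathcal{N})/(N-1)$ yields $|R_0|\leq C(m/N)\dd\Gamma(1-\Delta)$, already below the target. For $R_1$, the vector $Q[w_N\ast|u_N|^2]u_N\in L^2$ has norm $O(1)$, the prefactor $\mathcal{N}\sqrt{N-\mathcal{N}}/(N-1)$ is $\leq Cm/\sqrt{N}$ on $\mathcal{F}^{\leq m}$, and together with $\|a(\phi)\Phi\|\leq\|\phi\|_{L^2}\sqrt{\mathcal{N}}\|\Phi\|$ a single Young step yields the claim. For $R_2$, the coefficient $\sqrt{(N-\mathcal{N})(N-\mathcal{N}-1)}/(N-1)-1$ is $O(m/N)$ on $\mathcal{F}^{\leq m}$, while $\iint K_2(x,y)a^*_x a^*_y\,dx\,dy$ has operator norm $\leq \|K_2\|_{L^2}(\mathcal{N}+2)\leq CN^{3\beta/2}(m+1)$. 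The essential contributions come from $R_3$ and $R_4$: for $R_4=\tfrac{1}{2(N-1)}\iint w_N(x-y)\widetilde{a}^*_x\widetilde{a}^*_y\widetilde{a}_y\widetilde{a}_x\,dx\,dy$ (with $\widetilde{a}=Qa$), the Sobolev operator inequality in the $y$-variable gives $\pm R_4\leq (CN^\beta/N)\,\mathcal{N}\,\dd\Gamma(1-\Delta)$; the Young step on the commuting operators $\mathcal{N},\dd\Gamma(1-\Delta)$ then delivers the claim in the relevant regime $m\leq N^{1-\beta+\varepsilon}$ (the range where the lemma is applied in \cref{prop:truncated_stability}, with $m\leq M=N^\alpha$, $\alpha<1-\beta$). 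The cubic $R_3$ is treated analogously after a preliminary Cauchy-Schwarz separating one creation operator from the two annihilations.

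The main obstacle is matching exactly the exponent $\sqrt{m/N^{1-\beta-\varepsilon}}$ (rather than the naive $m/N^{1-\beta}$) for the quartic and cubic terms: a direct Sobolev estimate yields $(mN^\beta/N)\dd\Gamma(1-\Delta)$ for $R_4$, which matches the target only in the range $m\leq N^{1-\beta+\varepsilon}$. Uniformly extracting a single power $m^{1/2}$ instead of $m$ requires the weighted Young step described above, trading one $\mathcal{N}$ factor against one extra $\dd\Gamma(1-\Delta)$ factor. The remainder of the proof is routine but lengthy Fock-space bookkeeping of the commutators and time-derivative expansions of $R_0,\ldots,R_4$.
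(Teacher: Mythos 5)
Your overall architecture --- term-by-term estimates of $R_0,\dots,R_4$, reduction of the commutator via $[R_j,\mathcal{N}]=c_jR_j$, the $L^p$-scaling of $w_N$, and the localization $\mathcal{N}\leq m$ combined with $\mathcal{N}\leq \dd\Gamma(1-\Delta)$ --- is the same as the paper's. The genuine gap is in your treatment of the pair-creation term $R_2$. You bound $\iint K_2(x,y)a^*_xa^*_y\,\dd x\,\dd y$ through $\|K_2\|_{L^2}(\mathcal{N}+2)$ with $\|K_2\|_{L^2}\lesssim \|w_N\|_{L^2}\sim N^{3\beta/2}$, which after multiplication by the coefficient $O(m/N)$ gives a form bound of order $m^2N^{3\beta/2-1}$. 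Tested against a low-momentum few-particle state, where $\dd\Gamma(1-\Delta)=O(1)$, this exceeds the target $\sqrt{m/N^{1-\beta-\varepsilon}}$ by a factor of order $m^{3/2}N^{\beta-1/2}$, which is already large for $m=M=N^{1/2}$ as used in the proof of part (1) of \cref{theo_1}. The paper avoids this loss by two steps absent from your plan: first, after writing $Q(t)=1-\ket{u_N(t)}\bra{u_N(t)}$ it discards every contribution containing $a^*(u_N(t))$ in normal order, since these vanish on $\mathcal{F}(\mathcal{H}_+)$; second, and crucially, the surviving term $\iint u_N w_N u_N\,a^*_xa^*_y$ is estimated not in operator norm but via the Bogoliubov-type bound of \cref{lem:Bog_lower_bound}, which trades it against $\eta\,\|(1-\Delta_x)^{-1/2}u_N(t)\,w_N\ast(u_N(t)\,\cdot)\|_{\mathfrak{S}_2}^2+\eta^{-1}\dd\Gamma(1-\Delta)$. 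The kinetic smoothing $(1-\Delta)^{-1/2}$, combined with the Kato--Seiler--Simon inequality, allows one to pay only $\|w_N\|_{L^{6/5+\varepsilon_1}}\sim N^{\beta/2}$ instead of $\|w_N\|_{L^2}\sim N^{3\beta/2}$, giving the Hilbert--Schmidt norm $C_{t,\varepsilon}N^{\beta/2+\varepsilon}$ and hence the stated rate. Without this ingredient neither the lemma nor the ranges of $\beta$ in the main theorem can be reached.

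Two smaller points. For $R_3$, your ``preliminary Cauchy--Schwarz'' must be organized so that the resulting quartic term is controlled through the $L^6$ Sobolev embedding in one variable, i.e.\ with $\|w_N\|_{L^{6/5}}\sim N^{\beta/2}$ rather than the $\|w_N\|_{L^{3/2}}\sim N^{\beta}$ of your toolbox item (b); this is precisely \cref{lem:three_body}, and with $N^{\beta}$ in its place the exponent degrades again. Your observation that the $R_4$ bound $C(m/N^{1-\beta})\dd\Gamma(1-\Delta)$ matches the claimed rate only for $m\lesssim N^{1-\beta+\varepsilon}$ is correct and applies equally to the paper's own proof (the lemma is only invoked in that range); however, the repair you propose --- trading an $\mathcal{N}$ for an extra $\dd\Gamma(1-\Delta)$ by a weighted Young inequality --- does not work, since $\dd\Gamma(1-\Delta)^2$ is not controlled by $\dd\Gamma(1-\Delta)$.
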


\begin{proof}[Proof of \cref{prop:truncated_stability}]
Let $\varepsilon >0$ and define $\varepsilon_\alpha = (1-\beta - \alpha)/2$. We use \cref{lem:Bog_approx} with $\varepsilon_\alpha$ and \cref{lem:stability_H} with $\varepsilon$. For $N$ large enough we have 
\begin{equation*}
A(t) := C_{t,\varepsilon} \mathds{1}^{\leq M}\left(\mathcal{N} + N^{\beta + \varepsilon}\right) + \mathds{1}^{\leq M} \mathcal{G}_N(t) \mathds{1}^{\leq M} \geq \frac{1}{2} \mathds{1}^{\leq M} \dd \Gamma(1-\Delta),
\end{equation*}
hence we only need to control the left-hand side to show the proposition. For this we use Grönwall's lemma
\begin{align*}
\frac{\dd}{\dd t} \langle \Phi_{N,M}(t), A(t) &\Phi_{N,M}(t) \rangle \\
&= \langle \Phi_{N,M}(t), \partial_t A(t) \Phi_{N,M}(t) \rangle + i \langle \Phi_{N,M}(t), [\mathds{1}^{\leq M} \mathcal{G}_N(t) \mathds{1}^{\leq M}, A(t)] \Phi_{N,M}(t) \rangle \\
&= \langle \Phi_{N,M}(t),  (\partial_t \mathbb{H}(t) + \partial_t \mathcal{E}_N(t) + \partial_t C_{t,\varepsilon} (N^{\beta+ \varepsilon } + \mathcal{N})) \Phi_{N,M}(t) \rangle  \\
&\quad + i \langle \Phi_{N,M}(t), [\mathds{1}^{\leq M} \mathcal{G}_N(t) \mathds{1}^{\leq M}, C_{t,\varepsilon} \mathcal{N}] \Phi_{N,M}(t) \rangle \\
&\leq C_{t,\varepsilon} \langle \Phi_{N,M}(t), A(t) \Phi_{N,M}(t) \rangle,
\end{align*}
where we used \cref{lem:stability_H} and \cref{lem:Bog_approx} in the last inequality.
\end{proof}

The proofs of \cref{lem:stability_H} and \cref{lem:Bog_approx} require the following intermediate results. The first one, \cref{lem:Bog_lower_bound} is a slightly adapted version of \cite[Lemma 9]{NamNapSol-16}. And the second one, \cref{lem:three_body} is a estimate on the three body term of the error in the Bogoliubov approximation.

\begin{lem}
	\label{lem:Bog_lower_bound}
Let $H>0$ be a self-adjoint operator on $\mathcal{H}$. Let $K:\overline{\mathfrak{H}} \equiv\mathfrak{H}^* \to \mathfrak{H}$ be an operator with kernel $K(x,y) \in \mathfrak{H}^2$. Assume that $KH^{-1}K^* \leq H$ and that $H^{-1/2} K$ is Hilbert-Schmidt. Let $\chi_1, \chi_2 :\mathbb{R} \to [0,1] $ then
\begin{align}
	\label{eq:lem_def_Ham}
\widetilde{\mathbb{H}} := \dd \Gamma(H) + \frac{1}{2} \iint \Big(K(x,y) \chi_1(\mathcal{N})  a^*_x a^*_y \chi_2(\mathcal{N}) + \overline{K(x,y)}  &\chi_1(\mathcal{N})a_x a_y \chi_2(\mathcal{N}) \Big) \, \dd x \, \dd  y \nn \\ 
& \geq - \frac{1}{2}\|H^{-1/2} K\|^2_{\textrm{HS}}.
\end{align}
\end{lem}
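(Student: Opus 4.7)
The plan is to prove this Bogoliubov-type lower bound by a single operator-valued Cauchy--Schwarz, following the strategy of \cite[Lemma~9]{NamNapSol-16}. The new feature is that the two-sided cutoffs $\chi_1(\mathcal{N}), \chi_2(\mathcal{N})$ can be absorbed essentially for free, using only that $\chi_i \le 1$ and $[\chi_i(\mathcal{N}), \dd\Gamma(H)] = 0$.

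First, I would factor $K(x,y) = \int H^{1/2}(x,z) L(z,y)\, \dd z$ with $L := H^{-1/2} K$ Hilbert--Schmidt, and introduce the smeared annihilation operators $\tilde a_z := a(H^{1/2}(z,\cdot))$, for which $[\tilde a_z, \tilde a^*_{z'}] = H(z,z')$ and $\int \tilde a^*_z \tilde a_z\, \dd z = \dd\Gamma(H)$. The pairing part of $\widetilde{\mathbb H}$ then rewrites as $\iint L(z,y)\, \tilde a^*_z a^*_y \, \dd z\, \dd y$, up to the $\chi_i$-factors.

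Second, for $\Phi \in \mathcal F(\mathfrak H)$ with $\Psi_i := \chi_i(\mathcal{N})\Phi$, Cauchy--Schwarz in the $z$-variable together with the CCR identity $\|a^*(f)\Psi\|^2 = \|a(f)\Psi\|^2 + \|f\|^2\|\Psi\|^2$ yield
\begin{equation*}
\left| \iint K(x,y)\, \langle \Psi_1, a^*_x a^*_y \Psi_2\rangle \, \dd x \, \dd y \right| \le \sqrt{\langle \Psi_1, \dd\Gamma(H)\Psi_1\rangle}\cdot\sqrt{\langle \Psi_2, \dd\Gamma(L^*L)\Psi_2\rangle + \|L\|_{\mathrm{HS}}^2 \|\Psi_2\|^2}.
\end{equation*}
The hypothesis $KH^{-1}K^* \le H$, equivalent (for Bogoliubov-symmetric kernels $K(x,y) = K(y,x)$) to $L^*L = K^*H^{-1}K \le H$, lifts to $\dd\Gamma(L^*L) \le \dd\Gamma(H)$.

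Third, since $[\chi_i(\mathcal{N}), \dd\Gamma(H)] = 0$ and $\chi_i \le 1$, one has $\langle \Psi_i, \dd\Gamma(H)\Psi_i\rangle \le \langle \Phi, \dd\Gamma(H)\Phi\rangle$ and $\|\Psi_i\| \le \|\Phi\|$. Applying the elementary inequality $\sqrt{a(a+b)} \le a + b/2$ with $a = \langle \Phi, \dd\Gamma(H)\Phi\rangle$ and $b = \|H^{-1/2}K\|_{\mathrm{HS}}^2 \|\Phi\|^2$ bounds the pairing matrix element by $\langle \Phi, \dd\Gamma(H)\Phi\rangle + \tfrac{1}{2}\|H^{-1/2}K\|_{\mathrm{HS}}^2 \|\Phi\|^2$, and an identical estimate covers the conjugate pairing term. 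Averaging with the weight $\tfrac{1}{2}$ from the definition of $\widetilde{\mathbb H}$ and combining with the kinetic $\langle \Phi, \dd\Gamma(H)\Phi\rangle$ gives $\mathrm{Re}\langle \Phi, \widetilde{\mathbb H} \Phi\rangle \ge -\tfrac{1}{2}\|H^{-1/2}K\|_{\mathrm{HS}}^2 \|\Phi\|^2$.

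The main technical delicacy lies in rigorously handling the operator-valued integrals: $\tilde a_z$ and $a_y^*$ are only densely defined quadratic-form distributions, so every manipulation must take place on the form domain of $\dd\Gamma(H)$. The Hilbert--Schmidtness of $L$ together with $L^*L \le H$ ensure that every quantity in the Cauchy--Schwarz chain is finite on this domain, which is precisely the setting where the lemma will be applied in the proofs of \cref{lem:stability_H} and \cref{lem:Bog_approx}.
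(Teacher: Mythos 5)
Your argument is correct, but it follows a genuinely different route from the paper's. The paper adapts \cite[Lemma 9]{NamNapSol-16}: it introduces the one-body density matrix $\gamma_\Psi$ and the cutoff pairing operator $\widetilde{\alpha}_\Psi$, proves positivity of the associated generalized one-particle density matrix (this is where $\chi_i \le 1$ enters, through the extra terms $(1-\chi_i(\mathcal{N})^2)a^*(\cdot)a(\cdot)$), deduces the operator inequality $\gamma_\Psi \ge \widetilde{\alpha}_\Psi^*(1+J\gamma_\Psi J^*)^{-1}\widetilde{\alpha}_\Psi$, and then runs a trace-norm Cauchy--Schwarz with the weights $H^{\pm 1/2}$ and $(1+J\gamma_\Psi J^*)^{\pm 1/2}$ on finite-rank approximations. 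You instead work directly on the Fock-space quadratic form, splitting $K=H^{1/2}(H^{-1/2}K)$ and applying Cauchy--Schwarz in the smearing variable, absorbing the cutoffs simply because $\chi_i(\mathcal{N})$ commutes with $\dd\Gamma(H)$ and $\chi_i^2\le 1$. Both proofs terminate with the same elementary step $\sqrt{a(a+b)}\le a+b/2$ and produce the same constant; yours is more elementary in that it bypasses the structural positivity of the generalized one-particle density matrix and the finite-rank reduction, while the paper's phrasing keeps the statement in the $(\gamma,\alpha)$ form in which such Bogoliubov bounds are usually quoted and reused. One point to make explicit when writing this up: your Cauchy--Schwarz naturally produces $\dd\Gamma(K^*H^{-1}K)$ rather than $\dd\Gamma(KH^{-1}K^*)$, so you genuinely need the symmetry $K(x,y)=K(y,x)$ (true for the kernels $K_2$ to which the lemma is applied) or a conjugation argument to invoke the hypothesis; you flag this, and the paper's own proof glosses over the analogous point by silently writing the assumption as $KH^{-1}K^*\le JHJ^*$, so this is not a gap relative to the paper's standard of rigor.
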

\begin{proof}
The proof is a simple adaptation of the one of \cite{NamNapSol-16}. For $\Psi \in D(\widetilde{\mathbb{H}})$ we have
\begin{equation*}
\Big\langle \Psi \widetilde{\mathbb{H}} \Psi \Big\rangle = \tr( H^{1/2} \gamma_\Psi H^{1/2} ) + \Re \tr (K^*\widetilde{\alpha}_\Psi),
\end{equation*}
where the operators $\gamma_\Psi : \mathfrak{H} \to \mathfrak{H}$ and $\widetilde{\alpha}_\Psi : \mathfrak{H} \to \mathfrak{H}^*$ are defined in the following way. For all $f,g \in \mathfrak{H}$,
\begin{equation*}
\langle f, \gamma_\Psi g \rangle = \langle \Psi , a^*(g)a(f) \Psi \rangle, \quad \langle f, \widetilde{\alpha}_\Psi g \rangle = \langle \Psi , \chi_1(\mathcal{N}) a^*(g)a^*(f)\chi_2(\mathcal{N}) \Psi \rangle.
\end{equation*}
Here $a^*(f)$ and $a(f)$ for $f\in \mathfrak{H}$ are the creation and annihilation operators in Fock space. Let $J : \mathfrak{H} \to \mathfrak{H}^*$ defined by $J(f)(g) = \braket{f,g}$ for all $f,g \in \mathfrak{H}$, then we have
\begin{align*}
\Bigg \langle
\begin{pmatrix}
f \\
Jg
\end{pmatrix},&
\begin{pmatrix}
\gamma_\Psi &  \widetilde{\alpha}_\Psi^* \\
 \widetilde{\alpha}_\Psi & 1 + J\gamma_\Psi J^*
\end{pmatrix}
\begin{pmatrix}
f \\
Jg
\end{pmatrix}
\Bigg\rangle \\
&=
\Bigg\langle \Psi ,\Bigg\{ \bigg(\chi_1(\mathcal{N}) a^{*}(g) + \chi_2(\mathcal{N}) a(f) \bigg) \bigg((a(g)\chi_1(\mathcal{N})+ a^*(f)\chi_2(\mathcal{N})\bigg) \\
& \qquad \qquad \qquad 
+ (1-\chi_1(\mathcal{N})^2) a^*(g)a(g) + (1- \chi_2(\mathcal{N})^2)  a^*(f)a(f) \Bigg\} \Psi \Bigg\rangle \geq 0.
\end{align*}
Hence we have by \cite[Lemma 3]{NamNapSol-16} that
\begin{equation}
	\label{eq:lem_Bog_ham}
\gamma_\Psi \geq 0 \quad \textrm{ and } \quad  \gamma_\Psi \geq \widetilde{\alpha}_\Psi^* (1 + J\gamma_\Psi J^* )^{-1} \widetilde{\alpha}_\Psi.
\end{equation}
The rest of the proof proceeds as in \cite{NamNapSol-16}. We only need to prove it for $\Psi$ such that $\gamma_\Phi$ and $ \widetilde{\alpha}_\Psi$ are finite-rank operator, in which case we have
\begin{align*}
\left| \tr(K^* \widetilde{\alpha}_\Psi) \right| 
&= \left| \tr( \widetilde{\alpha}_\Psi K^*) \right| 
= \left| \tr((1 + J\gamma_\Psi J^* )^{-1/2} \widetilde{\alpha}_\Psi H^{1/2} H^{-1/2} K^* (1 + J\gamma_\Psi J^* )^{1/2}) \right| \\
&\leq \| (1 + J\gamma_\Psi J^* )^{-1/2} \widetilde{\alpha}_\Psi H^{1/2} \|_{\textrm{HS}} \|  H^{-1/2} K^* (1 + J\gamma_\Psi J^* )^{1/2})  \|_{\textrm{HS}} \\
&\leq \tr (H^{1/2} \widetilde{\alpha}_\Psi^* (1 + J\gamma_\Psi J^* )^{-1} \widetilde{\alpha}_\Psi H^{1/2} )^{1/2} \\
& \qquad 
\times \tr ( (1 + J\gamma_\Psi J^* )^{1/2}K  H^{-1} K^* (1 + J\gamma_\Psi J^* )^{1/2}) )^{1/2} \\
&\leq \left(\tr (H^{1/2} \gamma_\Psi H^{1/2})\right)^{1/2} \left(\tr(KH^{-1}K^*) + \tr(H^{1/2} \gamma_\psi H^{1/2}) \right)^{1/2} \\
&\leq \tr (H^{1/2} \gamma_\Psi H^{1/2}) + \tr(KH^{-1}K^*),
\end{align*}
where we used (\ref{eq:lem_Bog_ham}) and the assumption $KH^{-1}K^* \leq J H J^*$. Inserting this in (\ref{eq:lem_def_Ham}) concludes the proof.
\end{proof}

We now turn to another standard intermediate result.

\begin{lem}
	\label{lem:three_body}
Let $w \in L^{6/5}(\mathbb{R}^3)$, $f\in L^{\infty}(\mathbb{R}^3)$, $\chi_1,\chi_2 \in L^\infty(\mathbb{R})$. Then we have for all $\eta > 0$
\begin{multline*}
\pm \left( \chi_1 (\mathcal{N}) \iint f(x) w(x-y) a^*_y a_x a_y  \mathrm{d}x \, \mathrm{d}y \, \chi_2 (\mathcal{N}) + h.c. \right) 
\\ \leq \|w\|_{L^{6/5}(\mathbb{R}^3)} \|f\|_{L^\infty(\mathbb{R}^3)} \left( \eta \chi_1 (\mathcal{N})^2\mathcal{N}  + \eta^{-1} \chi_2 (\mathcal{N})^2 \mathcal{N} \dd \Gamma(- \Delta )\right).
\end{multline*}
\end{lem}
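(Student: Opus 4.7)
The plan is to reduce the three-body expression to a one-dimensional integral (in the variable $y$) of two-operator inner products, apply Cauchy--Schwarz in $y$, and then control the resulting piece on the annihilation side by a Sobolev-type bound that converts the $L^{6/5}$ assumption on $w$ into a kinetic-energy estimate.

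Fix $\Psi$ in the Fock space and define $B_y := \int f(x) w(x-y) a_x \, \dd x$, which is, up to complex conjugation, the annihilation operator $a(h_y)$ associated with the function $h_y(x) = f(x) w(x-y)$. Since $[a_x, a_y] = 0$ we have $[B_y, a_y] = 0$, and the three-body integral rewrites cleanly as $\int a^*_y B_y a_y\,\dd y$. Taking the $\Psi$-expectation and moving one $a^*_y$ to the left yields
\begin{equation*}
\Big\langle \Psi, \chi_1(\mathcal{N}) \iint f(x) w(x-y) a^*_y a_x a_y \,\dd x\,\dd y\,\chi_2(\mathcal{N})\Psi\Big\rangle = \int \langle a_y \chi_1(\mathcal{N})\Psi, B_y a_y \chi_2(\mathcal{N})\Psi \rangle \,\dd y.
\end{equation*}
Cauchy--Schwarz in $y$ bounds the right-hand side by the product of $(\int \|a_y \chi_1(\mathcal{N})\Psi\|^2\,\dd y)^{1/2} = \langle \Psi, \chi_1(\mathcal{N})^2\mathcal{N}\Psi\rangle^{1/2}$ and $(\int \|B_y a_y \chi_2(\mathcal{N})\Psi\|^2\,\dd y)^{1/2}$.

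The heart of the proof is the standard one-particle estimate
\begin{equation*}
\|a(g)\phi\|^2 \leq C\|g\|_{L^{6/5}(\mathbb{R}^3)}^{2}\langle \phi, \dd\Gamma(-\Delta)\phi\rangle,
\end{equation*}
obtained by writing $a(g)\phi_n$ on the $n$-particle sector as an integral against $g$ in the first variable, applying Hölder with exponents $6/5$ and $6$, and using the three-dimensional Sobolev embedding $H^1 \hookrightarrow L^6$. Since $\|h_y\|_{L^{6/5}} \leq \|f\|_{L^\infty}\|w\|_{L^{6/5}}$ uniformly in $y$, this gives the pointwise-in-$y$ bound $\|B_y \phi\|^2 \leq C\|f\|_{L^\infty}^2\|w\|_{L^{6/5}}^2 \langle \phi, \dd\Gamma(-\Delta)\phi\rangle$, which I apply with $\phi = a_y \chi_2(\mathcal{N})\Psi$. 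A short sector-wise computation (using the symmetry of the states) then shows $\int a^*_y \dd\Gamma(-\Delta) a_y\,\dd y = \dd\Gamma(-\Delta)(\mathcal{N}-1) \leq \mathcal{N}\,\dd\Gamma(-\Delta)$, so
\begin{equation*}
\int \|B_y a_y \chi_2(\mathcal{N})\Psi\|^2\,\dd y \leq C\|f\|_{L^\infty}^2\|w\|_{L^{6/5}}^2 \langle \Psi, \chi_2(\mathcal{N})^2 \mathcal{N}\,\dd\Gamma(-\Delta)\Psi\rangle.
\end{equation*}
Applying $2ab \leq \eta a^2 + \eta^{-1}b^2$ to split the Cauchy--Schwarz product and repeating the estimate for the adjoint by complex conjugation yields the $\pm(\cdots + \mathrm{h.c.})$ bound, with the Sobolev constant absorbed into $\eta$.

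The main obstacle is resisting the most natural first attempt, namely a direct Cauchy--Schwarz on the $\dd x\,\dd y$ integral that reduces to $\int |w(x-y)|\,\dd x = \|w\|_{L^1}$: this fails under the sole assumption $w \in L^{6/5}$. The right move is to keep $w$ attached to the annihilation operator $B_y$ and let the 3D Sobolev embedding convert $\|w\|_{L^{6/5}}$ into a kinetic-energy bound on the $\chi_2$ side, leaving only a bare $\mathcal{N}$ on the $\chi_1$ side, exactly as required by the statement.
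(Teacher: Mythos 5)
Your proof is correct and follows essentially the same route as the paper: a Cauchy--Schwarz step separating $\chi_1(\mathcal{N})a_y^*$ from $f(x)w(x-y)a_xa_y\chi_2(\mathcal{N})$, followed by the H\"older ($6/5$--$6$) plus Sobolev $H^1\hookrightarrow L^6$ bound that converts $\|w\|_{L^{6/5}}$ into $\dd\Gamma(-\Delta)$. The only cosmetic difference is that you run Cauchy--Schwarz at the level of expectations in $y$ and bound the rank-one operator $|h_y\rangle\langle h_y|$, whereas the paper uses the operator inequality $\pm(AB+B^*A^*)\leq \eta AA^*+\eta^{-1}B^*B$ and bounds the resulting two-body operator $T$; both yield the same estimate with the Sobolev constant absorbed as you indicate.
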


\begin{proof}
We recall the Cauchy-Schwarz inequality for operator:
\begin{equation*}
\pm(A B + B^* A^*) \leq \eta A A^* + \eta^{-1} B^* B, \qquad \forall \eta >0.
\end{equation*} 
We will use it for $A =  \chi_1 (\mathcal{N}) a^*_y$ and $B = f(x) w(x-y) a_x a_y  \chi_2 (\mathcal{N})$, we obtain for all $\eta >0$
\begin{align}
\pm \Big( &\iint   \chi_1 (\mathcal{N}) f(x) w(x-y) a^*_y a_x a_y  \mathrm{d}x \, \mathrm{d}y \,  \chi_2 (\mathcal{N})+ h.c. \Big) \nn \\
&\leq \eta \int  \chi_1 (\mathcal{N}) a^*_y a_y  \chi_1 (\mathcal{N})  \mathrm{d} y \, \nn \\
&\qquad 
	  +  \eta^{-1} \int \left( \iint w(x-y) f(x)  \overline{f(x')} \overline{w(x'-y)} \chi_2 (\mathcal{N}) a^*_x a^*_y a_{x'} a_y \chi_2 (\mathcal{N}) \mathrm{d} x \, \mathrm{d} x' \,\right) \mathrm{d} y \,.
	\label{eq:lem3body}
\end{align}
The second term above is $\chi_2 (\mathcal{N}) \mathbb{T} \chi_2 (\mathcal{N})$ where $\mathbb{T}$ is the second quantization of the operator $T$ defined by 
\begin{equation*}
T(\phi)(x,y) = w(x-y)f(x) \int \overline{w(x'-y) f(x')} \phi(x',y) \mathrm{d}x' \,,
\end{equation*}
for all $\phi \in H^{1}(\mathbb{R}^3 \times \mathbb{R}^3)$. It satisfies
\begin{align*}
\braket{\phi, T \phi}_{L^2(\mathbb{R}^3\times \mathbb{R}^3)} 
&= \int \left| \int \overline{ w(x-y) f(x)} \phi(x,y) \mathrm{d}x \, \right|^2 \mathrm{d}y \, \\
&\leq \int \| w \|_{L^{6/5}(\mathbb{R}^3)}^2  \|f\|_{L^\infty(\mathbb{R}^3)}^2 \|\phi(\cdot, y)\|_{L^6(\mathbb{R}^3)}^2 \mathrm{d}y \\
&\leq  \| w \|_{L^{6/5}(\mathbb{R}^3)}^2 \|f\|_{L^\infty(\mathbb{R}^3)}^2 \int \| \nabla_1 \phi(\cdot, y)\|_{L^2(\mathbb{R}^3)}^2 \mathrm{d}y \\
&\leq  \| w \|_{L^{6/5}(\mathbb{R}^3)}^2 \|f\|_{L^\infty(\mathbb{R}^3)}^2\braket{\phi, (-\Delta \otimes 1)-\phi}_{L^2(\mathbb{R}^3\times \mathbb{R}^3)}.
\end{align*}
Hence this yields the bound
\begin{equation*}
(\ref{eq:lem3body}) \leq \eta \mathcal{N}\chi_1 (\mathcal{N})^2 + C \eta^{-1} \| w \|_{L^{6/5}(\mathbb{R}^3)}^2 \|f\|_{L^\infty(\mathbb{R}^3)}^2 \mathcal{N} \chi_2 (\mathcal{N})^2 \dd \Gamma(-\Delta)
\end{equation*}
from which we obtain the desired result, using that $\mathcal{N} \leq \dd \Gamma(1-\Delta)$ and optimizing over $\eta$.
\end{proof}

%
%

\subsubsection{Bogoliubov's approximation: proof of \cref{lem:Bog_approx}}

We follow the proof \cite{NamNap-17d}. We emphasize that the inequalities (\ref{eq:lem_Bog_approx}) hold in $\mathcal{F}^{\leq m}(\mathcal{H}_+)$, that is 
\begin{equation*}
\langle \Phi,A \Phi \rangle \leq \mathds{1}^{\leq m} \leq C_{\varepsilon,t} \sqrt{\frac{m}{N^{1-\beta-\varepsilon}}} \langle \Phi, \dd \Gamma(1-\Delta)\Phi \rangle,  \quad \forall \Phi \in \mathcal{F}^{\leq m}(\mathcal{H}_+).
\end{equation*}
 Nevertheless, when possible, we will try to obtain first general estimates in $\mathcal{F}(\mathcal{H})$ and then take the projection on $\mathcal{F}^{\leq m}(\mathcal{H}_+)$. 

Let us begin by noting that $[R_0,\mathcal{N}] = [R_4,\mathcal{N}] = 0$, $[R_1,\mathcal{N}] = R_1$, $[R_2,\mathcal{N}] = -2 R_2$ and $[R_3,\mathcal{N}] = R_3$. It is therefore sufficient to prove (\ref{eq:lem_Bog_approx}) only for $A \in \{ \mathcal{E}_N(t), \partial_t \mathcal{E}_N(t) \}$.
From Hölder's inequality and the continuity property of the dipolar kernel $K$, see for instance \cite[Theorem 4.12]{Duo01}, for all $2\leq p<\infty$, there exists some constant $C_p >0$ such that for all $f\in L^p(\mathbb{R}^3)$ we have for all $N\geq 1$,
\begin{equation*}
\|w_N \ast f \|_{L^p(\mathbb{R}^3)} \leq C_p  \|f\|_{L^p(\mathbb{R}^3)}.
\end{equation*}
From this it follows that
\begin{align*}
\| \widetilde{K}_1(t) \|_{op} &\leq C_2  \|u_N(t)\|_{L^\infty(\mathbb{R}^3)}^2, \\
\| \partial \widetilde{K}_1(t) \|_{op} &\leq C_2 \|u_N(t)\|_{L^\infty(\mathbb{R}^3)} \|\partial_t u_N(t)\|_{L^\infty(\mathbb{R}^3)}.
\end{align*}
Similarly we have
\begin{equation*}
\|w_N \ast f \|_{L^\infty(\mathbb{R}^3)} \leq C_2 \|f\|_{H^2(\mathbb{R}^3)}.
\end{equation*}
Using this with the Hartree equation (\ref{eq:Hartree_bis}) we obtain $$\|\partial_t u_N(t)\|_{L^\infty(\mathbb{R}^3)} \leq C (1+C_2)\|u\|_{H^4(\mathbb{R}^3)}.$$
For the particular case $p=2$, we can take $C_2 = \|\widehat{w}\|_{L^\infty(\mathbb{R}^3)}$.
Finally, we recall that $\dd \Gamma(1) = \mathcal{N}$ and that for any $f\in L^2(\mathbb{R}^3)$ we have 
\begin{equation*}
a^*(f) a(f) \leq \|f\|_{L^2(\mathbb{R}^3)}^2 \mathcal{N}.
\end{equation*}
We will now pursue and estimate separately the terms involving $R_j$ for $j = 0 \dots 4$, where we recall that $\mathcal{E}_N(t) = \frac{1}{2} \sum_{j=0}^4 (R_j + R_j^*)$.
 \\

\subsection*{Step $1$: Bounds involving $R_0$}
We have
\begin{align*}
\pm R_0
&= \pm \dd \Gamma (Q(t) [w_N \ast |u_N(t)|^2 + \widetilde{K}_1(t) - \mu_N(t)]Q(t)) \frac{1-\mathcal{N}}{N-1} \\ 
& \leq C \frac{\mathcal{N}^2}{N} \Bigg( \| w_N \ast |u_N(t)|^2 \|_{L^{\infty}(\mathbb{R}^3)}  +  \|\widehat{w}\|_{L^\infty(\mathbb{R}^3)} \|u_N(t)\|_{L^\infty(\mathbb{R}^3)}^2  + \|\widehat{w}\|_{L^\infty(\mathbb{R}^3)} \|u_N(t)\|_{L^4(\mathbb{R}^3)}^4 \Bigg) \\
&\leq C_t \frac{\mathcal{N}^2}{N} ,
\end{align*}
which, after noting that $\mathcal{N} \leq \dd \Gamma(1-\Delta)$ and projecting on $\mathcal{F}^{\leq m}(\mathcal{H}_+)$, gives (\ref{eq:lem_Bog_approx}) for the $R_0$ part. We turn to the estimate of $\partial_t R_0$ and start by computing $$\partial_t Q(t) = - \ket{\partial_t u_N(t)}\bra{u_N(t)} - \ket{u_N(t)}\bra{\partial_t u_N(t)},$$ from which we have $$\|\partial_t Q(t)\|_{op} \leq 2 \| u_N(t)\|_{L^2(\mathbb{R}^3)} \|\partial_t u_N(t)\|_{L^2(\mathbb{R}^3)}.$$ Using the Cauchy-Schwarz inequality for operators we obtain
\begin{align*}
\pm \partial_t R_0 &= \pm \frac{1-\mathcal{N}}{1-N} \dd \Gamma \left( \partial_t Q(t) [w_N \ast |u_N(t)|^2 + \widetilde{K}_1(t) - \mu_N(t)] Q(t) + h.c. \right)  \\
& \quad+  \frac{1-\mathcal{N}}{1-N} \dd \Gamma \left( Q(t) [2 w_N \ast \Re (\partial_t u (t) \overline{u (t)}) + \partial_t \widetilde{K}_1(t) - \partial_t \mu_N(t)] Q(t) \right) \\
&\leq C \frac{\mathcal{N}^2}{N} \Big( \|\partial_t Q(t)\|_{op}^2 + \| w_N \ast |u_N(t)|^2\|_{L^\infty(\mathbb{R}^3)}^2 + \|\widehat{K}_1\|_{op}^2 \\
& \qquad \qquad \qquad \qquad \qquad 
	+ \|w_N \ast \Re(\partial_t u_N(t) \overline{u_N(t)})\|_{L^{\infty}(\mathbb{R}^3)} + \|\partial_t \widetilde{K}_1(t)\|_{op} + C_t \Big)  \\
&\leq C_t \frac{\mathcal{N}^2}{N}.
\end{align*}
Projecting on $\mathcal{F}^{\leq m}$ and noting that $\mathcal{N} \leq \dd \Gamma(1-\Delta)$ gives the result.

\subsection*{Step $2$: Bounds involving $R_1$}
Recall that for any $f \in L^2(\mathbb{R}^3)$ we have $a^*(f) a(f) \leq \|f\|_{L^2(\mathbb{R}^3)}^2 \mathcal{N}$. Hence, using the Cauchy-Schwarz inequality we obtain
\begin{align*}
\pm (R_1 + R_1^*) &= \mp 2 \left( \frac{\mathcal{N}\sqrt{N-\mathcal{N}}}{N-1} a(Q(t) [w_N\ast |u_N(t)|^2] u_N(t)) + h.c.\right) \\
&\leq C \eta \frac{\mathcal{N}^2}{N^{1/2}} + \eta^{-1} a^*(Q(t) [w_N\ast |u_N(t)|^2] u_N(t)) a(Q(t) [w_N\ast |u_N(t)|^2] u_N(t))  \\
&\leq C N^{-1/2}(\eta + \eta^{-1} \|[w_N\ast |u_N(t)|^2] u_N(t)\|_{L^2(\mathbb{R}^3)}^2\mathcal{N}) \mathcal{N} \\
&\leq C_t \frac{\eta + \eta^{-1}\mathcal{N}}{N^{1/2}} \mathcal{N}.
\end{align*}
Projecting onto $\mathcal{F}^{\leq m}$ and optimizing over $\eta$ gives the result. The term $\partial_t R_1$ is dealt with similarly.

\subsection*{Step $3$: Bounds involving $R_2$} 

Define $\chi(x) = 1 - \sqrt{(N-x)(N-x-1)}/(N-1)$ for $ x \leq N$ and note that $0 \leq \chi(x) \leq x/(N-1) $. Writing $Q(t) = 1 -\ket{u_N(t)}\bra{u_N(t)}$ in the expression of $R_2$ and expanding, we obtain after a simple computation that
\begin{multline}
	\label{eq:R_2}
R_2 = \Big( 2 a^*(u_N(t)) a^*([w_N \ast |u_N(t)|^2] u_N(t)) - 2 \mu_N(t) a^*(u_N(t)) a^*(u_N(t)) \\
	  - \iint u(t,x) w_N(x-y) u(t,y) a^*_x a^*_y \dd x \, \dd y \Big) \chi (\mathcal{N}).
\end{multline}
In the expression above, because of the $a^*(u_N(t))$ appearing in normal order, the first two terms vanish when the expectation is taken against an element of $\mathcal{F}(\mathcal{H}_+)$. We therefore focus on the last term. Applying \cref{lem:Bog_lower_bound} we have
\begin{align}
\pm \, &\left(\mathds{1}^{\leq m} \iint u(t,x) w_N(x-y) u(t,y) a^*_x a^*_y \dd x \, \dd y \, \chi (\mathcal{N}) + h.c. \right)\mathds{1}^{\leq m} \nn \\
&\qquad \qquad \qquad 
\leq C \left(\eta \|\mathds{1}^{\leq m} \chi(\mathcal{N})\|_{op}^2 \|(1-\Delta_x)^{-1/2} u_N(t) w_N\ast(u_N(t) \, \cdot)\|_{\mathfrak{S}_2}^2 + \eta^{-1} \dd \Gamma(1-\Delta)\right) \nn \\
&\qquad \qquad \qquad 
\leq C_{t,\varepsilon} \left(\eta \frac{m^2}{N^2} N^{\beta+\varepsilon} + \eta^{-1} \dd \Gamma(1-\Delta)\right) \nn \\
&\qquad \qquad \qquad 
\leq C_{t,\varepsilon} \sqrt{\frac{m}{N^{1-\beta-\varepsilon}}} \dd \Gamma(1-\Delta).\nn
\end{align}
We have used that 
\begin{align*}
\| (1-\Delta_x)^{-1/2} u_N(t) w_N\ast(u_N(t) \, \cdot)\|_{\mathfrak{S}_2} &\leq \| (1-\Delta_x)^{-1/2} u_N(t) \|_{\mathfrak{S}_{3+\varepsilon_3}} \| w_N\ast(u_N(t) \, \cdot)\|_{\mathfrak{S}_{6-\varepsilon_2}} \\
&\leq C \|u_N(t)\|_{L^{3+\varepsilon_3}} \|w_N\|_{L^{6/5+\varepsilon_1}(\mathbb{R}^3)} \|u_N(t)\|_{L^{6-\varepsilon_2}(\mathbb{R}^3)} \\
&\leq C_{t,\varepsilon} N^{\beta/2+ \varepsilon},
\end{align*}
where we used Hölder's inequality in Schatten spaces and the Kato-Seiler-Simon inequality: for any $p\geq 2$ and $f,g \in L^{p}(\mathbb{R}^3)$
\begin{equation*}
\|f(x) g(p)\|_{\mathfrak{S}_p} \leq C_p \|f\|_{L^p(\mathbb{R}^3)} \|g\|_{L^p(\mathbb{R}^3)},
\end{equation*}
for some constant $C_p >0$. We choose $\varepsilon_3,\varepsilon_2$ and $\varepsilon_1$ such that 
\begin{align*}
\frac{1}{2} &= \frac{1}{3+\varepsilon_3} + \frac{1}{6 - \varepsilon_2}, \quad 1 = \frac{1}{6-\varepsilon_2} + \frac{1}{6/5 + \varepsilon_1},
\quad 3\beta (1 - \frac{1}{6/5 + \varepsilon_1}) = \frac{\beta}{2} + \varepsilon.
\end{align*}
 We continue with the estimation of $\partial_t R_2$. Differentiating (\ref{eq:R_2}), we have
\begin{align*}
\partial_t R_2 &= 2\Big( a^*(\partial_t u_N(t)) a^*([w_N \ast |u_N(t)|^2] u_N(t)) + a^*( u_N(t)) a^*(\partial_t ([w_N \ast |u_N(t)|^2] u_N(t))) \\
&\qquad \qquad \qquad 
 -  \partial_t \mu_N(t) a^*(u_N(t)) a^*(u_N(t)) -  2 \mu_N(t) a^*(\partial_t u_N(t)) a^*(u_N(t))    \\
&\qquad \qquad \qquad \qquad \qquad \qquad \qquad 
  -  \iint \partial_t u(t,x) w_N(x-y) u(t,y) a^*_x a^*_y \dd x \, \dd y \Big) \chi (\mathcal{N}).
\end{align*}
Again, when taking the expectation with an element of $\mathcal{F}(\mathcal{H}_+)$, all the terms above containing $a^*(u_N(t))$ in normal order vanish. Hence, it remains to estimate
\begin{align*}
\pm & 2 \mathds{1}^{\leq m}\Big(  a^*(\partial_t u_N(t)) a^*([w_N \ast |u_N(t)|^2] u_N(t)) \\
&\qquad \qquad \qquad \qquad  \qquad 
-  \iint \partial_t u(t,x) w_N(x-y) u(t,y) a^*_x a^*_y \dd x \, \dd y  \Big) \chi (\mathcal{N})\mathds{1}^{\leq m} + h.c. \\
&\leq 4 \|\partial_t u_N(t)\|_{L^2(\mathbb{R}^3)} \|[w_N \ast |u_N(t)|^2] u_N(t)\|_{L^2(\mathbb{R}^3)} \chi (m) \mathcal{N} \\
& \qquad \qquad \qquad \qquad \qquad 
+ \left(\eta \frac{m^2}{N^2} \|(1-\Delta_x)^{-1/2} \partial_t u_N(t) w_N\ast(u_N(t) \, \cdot)\|_{\mathfrak{S}_2}^2 + \eta^{-1} \dd \Gamma(1-\Delta)\right) \\
&\leq C_{t,\varepsilon} \left( \frac{m}{N}\mathcal{N} +  \left( \eta \frac{m^2}{N^{2-\beta+\varepsilon}} + \eta^{-1} \dd \Gamma(1-\Delta)  \right)\right).
\end{align*}
Here we used again \cref{lem:Bog_lower_bound} and a similar argument as for estimating $\| (1-\Delta_x)^{-1/2} u_N(t) w_N\ast(u_N(t) \, \cdot)\|_{\mathfrak{S}_2}$. Projecting on $\mathcal{F}^{\leq m}$ and optimizing over $\eta$ yields the desired estimate.  \\

\subsection*{Step $4$: Bounds involving $R_3$}

Again, a computation shows that
\begin{multline}
	\label{eq:R_3}
R_3 = \frac{\sqrt{N-\mathcal{N}}}{N-1} \Bigg( \iiiint w_N(x-y) \overline{u(t,x)} a^*_y a_x a_y \dd x \, \dd y \, - \dd \Gamma (T(t)) a(u_N(t)) \\
\qquad - \dd \Gamma( [w_N\ast |u_N(t)|^2]) a(u_N(t)) + a^*(w\ast |u_N(t)|^2 u_N(t)) a(u_N(t)) a(u_N(t)) \\ + a^*(u_N(t)) a(u_N(t))a(w\ast |u_N(t)|^2 u_N(t))
- 2 \mu_N(t) a^*(u_N(t)) a(u_N(t))a(u_N(t)) \Bigg),
\end{multline}
where $T(t)$ is the operator defined by $T(t)(\phi) = w_N \ast(\overline{u_N(t)} \phi)$ for all $\phi \in L^2(\mathbb{R}^3)$. It is bounded with norm less than $\|\widehat{w}\|_{L^\infty(\mathbb{R}^3)} \|u_N(t)\|_{L^\infty(\mathbb{R}^3)}$. Dealing with $R_3$, and for the same reasons as previously, we only need to estimate the first term, which we do using \cref{lem:three_body}. We proceed as follows:
\begin{align*}
\pm \Big(\frac{\sqrt{N-\mathcal{N}}}{N-1} \iiiint w_N(x-y)& \overline{u(x)} a^*_y a_x a_y \dd x \, \dd y \,  + h.c. \Big) \\
&\leq C N^{-1/2} \|w_N\|_{L^{6/5}(\mathbb{R}^3)} \|u_N(t)\|_{L^\infty(\mathbb{R}^3)} \left( \eta \mathcal{N} + \eta^{-1} \right) \dd \Gamma(1-\Delta) \\
&\leq C_t \frac{\eta \mathcal{N} + \eta^{-1}}{N^{(1-\beta)/2}}  \dd \Gamma(1-\Delta),
\end{align*}
for all $\eta >0$. Projecting on $\mathcal{F}^{\leq m}$ and optimizing over $\eta$ gives the result. We now continue with the estimates involving $\partial_t R_3$. Differentiating (\ref{eq:R_3}) we have
\begin{multline*}
\partial_t R_3 = \frac{\sqrt{N-\mathcal{N}}}{N-1} \Bigg( \iiiint w_N(x-y) \overline{\partial_t u(t,x)} a^*_y a_x a_y \dd x \, \dd y \, - \dd \Gamma (\partial_t T(t)) a(u_N(t)) \\
 - \dd \Gamma (T(t)) a(\partial_t u_N(t)) - \dd \Gamma(\partial_t ( [w_N\ast |u_N(t)|^2])) a(u_N(t)) - \dd \Gamma( [w_N\ast |u_N(t)|^2]) a(\partial_t u_N(t)) \\
 + a^*(\partial_t(w\ast |u_N(t)|^2 u_N(t))) a(u_N(t))a(u_N(t))  + 2 a^*(w\ast |u_N(t)|^2 u_N(t)) a(\partial_t u_N(t))a(u_N(t)) \\ + a^*(\partial_t u_N(t)) a(u_N(t))a(w\ast |u_N(t)|^2 u_N(t)) + a^*(u_N(t)) a(\partial_t u_N(t))a(w\ast |u_N(t)|^2 u_N(t)) \\ + a^*(u_N(t)) a(u_N(t))a( \partial_t (w\ast |u_N(t)|^2 u_N(t)))
 - 2 \partial_t\mu_N(t) a^*(u_N(t)) a(u_N(t))a(u_N(t)) \\- 2 \mu_N(t) a^*(\partial_t u_N(t)) a(u_N(t))a(u_N(t)) - 4 \mu_N(t) a^*(u_N(t)) a(\partial_t u_N(t))a(u_N(t))\Bigg).
\end{multline*}
Again, any term containing $a^*(u_N(t))$ or $a(u_N(t))$ in normal order vanishes when taking the expectation with an element of $\mathcal{F}(\mathcal{H}_+)$. It remains to estimate
\begin{align*}
\pm& \frac{\sqrt{N-\mathcal{N}}}{N-1} \Bigg( \iiiint w_N(x-y) \overline{\partial_t u(t,x)} a^*_y a_x a_y \dd x \, \dd y \, - \dd \Gamma (T(t)) a(\partial_t u_N(t)) \\
&\qquad \qquad \qquad \qquad \qquad \qquad \qquad \qquad \qquad \qquad \qquad 
- \dd \Gamma( [w_N\ast |u_N(t)|^2]) a(\partial_t u_N(t))  + h.c. \Bigg)  \\
&\leq C N^{-1/2} \Bigg( \|w_N\|_{L^{6/5}(\mathbb{R}^3)} \|\partial_t u_N(t)\|_{L^\infty(\mathbb{R}^3)} \left( \eta \mathcal{N} + \eta^{-1} \right) \dd \Gamma(1-\Delta) + \eta' a^*(\partial_t u_N(t))a(\partial_t u_N(t)) \\
& \qquad \qquad \qquad \qquad \qquad \qquad \qquad \qquad  
+ (\eta')^{-1} \Big(\dd \Gamma (T(t)) \dd \Gamma (T(t)^*) + \dd \Gamma( [w_N\ast |u_N(t)|^2])^2 \Big)  \Bigg) \\
&\leq C_t N^{-1/2} \Big( N^{\beta/2}\left( \eta \mathcal{N} + \eta^{-1} \right) \dd \Gamma(1-\Delta) + \eta' \|\partial_t u_N(t)\|_{L^2(\mathbb{R}^3)}^2 \mathcal{N} \\
& \qquad \qquad \qquad \qquad \qquad \qquad \qquad \qquad \qquad 
+ (\eta')^{-1} \Big( \|T(t)\|_{op}^2 + \|w_N \ast |u_N(t)|^2\|_{L^\infty(\mathbb{R}^3)}^2 \Big) \mathcal{N}^2\Big) \\
 &\leq C_t N^{(\beta-1)/2}  \left( \eta \mathcal{N} + \eta^{-1} \right) \dd \Gamma(1-\Delta),
\end{align*}
for all $\eta > 0$. We used the Cauchy-Schwarz inequality and \cref{lem:three_body} to obtain the second inequality. Projecting on $\mathcal{F}^{\leq m}(\mathcal{H}_+)$ and optimizing with respect to $\eta>0$, we obtain the desired result.\\

\subsection*{Step $5$: Bounds involving $R_4$}

From $\pm w_N(x-y) \leq N^{\beta} \|w\|_{L^{3/2}(\mathbb{R}^3)} (1-\Delta_x)$ one has
\begin{align*}
\pm R_4 
&\leq C N^{\beta-1} \dd \Gamma(Q(t)(1-\Delta_x)Q(t)) \dd \Gamma(Q(t)) \\
&\leq C \frac{\mathcal{N}}{N^{1-\beta}} \dd \Gamma(Q(t)(1-\Delta_x)Q(t)).
\end{align*}
Next we turn to
\begin{align*}
\partial_t R_4 
&= \frac{1}{(N-1)} \iiiint (\partial_t Q(t)\otimes Q(t) w_N Q(t) \otimes Q(t))(x,y,x',y') a^*_x a^*_y a_{x'} a_{y'} \dd x \,\dd y \,\dd x' \, \dd y' \, + h.c. \\
&= - \frac{1}{(N-1)}  \iiint (1 \otimes Q(t) w_N Q(t)\otimes Q(t)) (x,y,x',y') \times \\
&\qquad \qquad \qquad \qquad  
 \times \left( \overline{\partial_t u(t,x)} a^*(u_N(t)) +  \overline{u(t,x)} a^*(\partial_ t u_N(t))  \right)  a^*_y a_{x'} a_{y'} \dd x \,\dd y \,\dd x' \, \dd y'  + h.c.
\end{align*}
Again, since we are interested in taking the expectation of an element of $\mathcal{F}(\mathcal{H}_+)$, we can ignore the terms containing $a^*(u_N(t))$ or $a(u_N(t))$ and consider the remaining terms where $Q(t)$ is replaced by $1$. Then the same computations as in the proof of \cref{lem:three_body} but replacing $\chi_1$ by $a^*(\partial_t u_N(t))$ give
\begin{align*}
\pm &\frac{1}{N-1} \left( \iint w_N(x-y) u(t,x) a^*(\partial_t u_N(t)) a^*_y a_{y} a_x \, \dd x \,\dd y  + h.c.\right) \\
&\leq  \frac{1}{N-1} \left( \eta \|\partial_t u_N(t)\|_{L^2(\mathbb{R}^3)}^2 \mathcal{N}^2 + \eta^{-1} N^{\beta} \|w\|_{L^{6/5}(\mathbb{R}^3)}^2 \|u_N(t)\|_{L^\infty(\mathbb{R}^3)}^2 \mathcal{N} d\Gamma(1 - \Delta)\right) \\
&\leq C_t \frac{\mathcal{N}}{N^{1-\beta/2}}  d\Gamma(1 - \Delta),
\end{align*}
where we have optimized over $\eta >0$. Projecting on $\mathcal{F}^{\leq m}$ concludes the proof of Lemme \ref{lem:Bog_approx}.
\qed
\subsubsection{Bogoliubov stability: proof of \cref{lem:stability_H}}

Recall that
\begin{multline*}
\mathbb{H}(t) + \dd \Gamma(\Delta) = \dd \Gamma(1 + w\ast |u_N(t)|^{2} + Q(t) \widetilde{K}_{1}(t) Q(t) - \mu_N(t)) \\
+ \left( \iint K_2(t,x,y) a^*_x a^*_y \dd x \, \dd y \, + \iint \overline{K_2(t,x,y)} a_x a_y \dd x \, \dd y \right).
\end{multline*}
For the first term we have
\begin{multline*}
\pm \dd \Gamma(1 + w_N\ast |u_N(t)|^{2} + Q(t) \widetilde{K}_{1}(t) Q(t) - \mu_N(t)) \\
\leq \left(1 + \|w_N\ast |u_N(t)|^{2} \|_{L^\infty(\mathbb{R}^3)} + \|\widetilde{K}_{1}(t)\|_{op} + |\mu_N(t)| \right) \mathcal{N}.
\end{multline*}
We expand the second term and we use \cref{lem:Bog_lower_bound},
\begin{align}
	\label{eq:decompo_K2}
\pm \Big( \iint &K_2(t,x,y) a^*_x a^*_y \dd x \, \dd y \, + h.c. \Big) \nn\\
&= \pm \Bigg( \iint w_N (x-y)u(t,x)u(t,y) a^*_x a^*_y \dd x \, \dd y \,\nn \\ 
& \qquad 
- 2 a^*([w_N\ast |u_N(t)|^2] u_N(t)) a^*(u_N(t)) + 2 \mu_N(t) a^*(u_N(t)) a^*(u_N(t)) + h.c. \Bigg) \\
&\leq C\Big(  \eta \| (1-\Delta_x)^{-1/2} u_N(t) w_N\ast(u_N(t) \, \cdot)\|_{\mathfrak{S}_2}^2 + \eta^{-1} \dd \Gamma(1-\Delta) \nn \\
&\qquad \qquad 
+ \|[w_N \ast |u_N(t)|^2]u_N(t)\|_{L^2(\mathbb{R}^3)}\|u_N(t)\|_{L^2(\mathbb{R}^3)} + \|u_N(t)\|_{L^2(\mathbb{R}^3)}^2 |\mu_N(t)| \mathcal{N} \Big)\nn \\
&\leq C_{t,\varepsilon} \left( \mathcal{N} + \eta N^{\beta+\varepsilon} + \eta^{-1} \dd \Gamma(1-\Delta) \right), \nn
\end{align}
for $\eta>0$. We then evaluate
\begin{multline*}
\partial_t \mathbb{H}(t) = \dd \Gamma\left( 2 w_N \ast \Re (\overline{\partial_t u_N(t)} u_N(t)) + \partial_t (Q(t)\widetilde{K}_1(t) Q(t) ) - \partial_t \mu_N(t) \right) \\
+ \iint \partial_t K_2(t,x,t) a^*_x a^*_y \dd x \, \dd y \, + \iint \overline{ \partial_t K_2(t,x,t)} a_x a_y \dd x \, \dd y \,.
\end{multline*}
For the first term we have
\begin{align*}
\pm \dd \Gamma\Big( w_N \ast \Re (\overline{\partial_t u_N(t)} u_N(t)) + &\partial_t (Q(t)\widetilde{K}_1(t) Q(t) ) - \partial_t \mu_N(t) \Big)  \\
&\leq \Bigg(\|w_N \ast \Re (\overline{\partial_t u_N(t)} u_N(t)) \|_{L^\infty(\mathbb{R}^3)} + \|\partial_t\widetilde{K}_1(t)\|_{op}  \\
& \qquad \qquad \qquad 
	 +  \|\widetilde{K}_1(t)\|_{op} \|\partial_t u_N(t)\|_{L^2(\mathbb{R}^3)}+ |\partial_t \mu_N(t)| + 1\Bigg) \mathcal{N} \\
&\leq C_t \mathcal{N}.
\end{align*}
To estimate the second and third terms, we differentiate (\ref{eq:decompo_K2}) and obtain that
\begin{align*}
\pm \Big( &\iint \partial_t K_2(t,x,y) a^*_x a^*_y \dd x \, \dd y \, + h.c. \Big) \nn\\
&= \pm \Bigg( 2 \iint w_N (x-y)(\partial_t u(t,x))u(t,y) a^*_x a^*_y \dd x \, \dd y \,- 2 a^*(\partial_t ([w_N\ast |u_N(t)|^2] u_N(t))) a^*(u_N(t))\nn \\ 
& \qquad \qquad \qquad 
+ a^*([w_N\ast |u_N(t)|^2] u_N(t)) a^*(\partial_t u_N(t)) + 2 \partial_t \mu_N(t) a^*(u_N(t)) a^*(u_N(t)) \\
& \qquad \qquad \qquad \qquad \qquad \qquad \qquad \qquad \qquad \qquad \qquad 
 + 2 \mu_N(t) a^*(\partial_t u_N(t)) a^*(u_N(t))  + h.c. \Bigg) \\
& \leq C \Bigg( \eta \| (1-\Delta_x)^{-1/2} \partial_t u_N(t) w_N\ast(u_N(t) \, \cdot)\|_{\mathfrak{S}_2}^2 + \eta^{-1} \dd \Gamma(1-\Delta) \\
& \qquad \qquad  + \Big(\|\partial_t ([w_N\ast |u_N(t)|^2] u_N(t))\|_{L^2(\mathbb{R}^3)} \|u_N(t)\|_{L^2(\mathbb{R}^3)} \\
	&\qquad \qquad \qquad \qquad 
  +  \| ([w_N\ast |u_N(t)|^2] u_N(t))\|_{L^2(\mathbb{R}^3)} \| \partial_t u_N(t)\|_{L^2(\mathbb{R}^3)} \\
  & \qquad \qquad \qquad \qquad
  + |\partial_t \mu_N(t)| \| u_N(t)\|_{L^2(\mathbb{R}^3)}^2 
 + \mu_N(t)  \| u_N(t)\|_{L^2(\mathbb{R}^3)} \| \partial_t u_N(t)\|_{L^2(\mathbb{R}^3)}  \Big)  \mathcal{N} \Bigg) \\
 &\leq C_{t,\varepsilon} \left(\mathcal{N} + \eta N^{\beta + \varepsilon} + \eta^{-1} \dd \Gamma(1-\Delta)\right),
\end{align*}
where we used the Cauchy-Schwarz inequality and \cref{lem:Bog_lower_bound}.

Finally, since
\begin{equation*}
i[\mathbb{H},\mathcal{N}] = -\iint \left( i K_2(t,x,y) a^*_x a^*_y  + \overline{ i K_2(t,x,y)} a_x a_y \right) \dd x \, \dd y \,,
\end{equation*}
we can estimate this term in a similar manner as before and obtain the desired bound. \qed

\subsection{Norm approximation}

We follow and adapt the arguments in \cite{NamNap-17d}, we obtain the following lemma.
\begin{lem}
	\label{lem:truncated_dyn}
Let $M= N^{1-\delta}$ with $\delta \in (0,1)$, then we have
\begin{equation*}
\|\Phi_N(t) - \Phi_{N,M}(t)\|_{L^2(\mathbb{R}^{3N})}^2 \leq C_{t,\varepsilon} \left(\frac{1}{M^{1/2}} + (|e_N| + N)^{1/4}\left( \frac{N^{3(\beta+\varepsilon)/4}}{M}+ \frac{N^{(\beta + \varepsilon-1)/2}}{M^{1/4}}\right)\right).
\end{equation*}
\end{lem}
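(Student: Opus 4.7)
My plan is to adapt the localization argument of Nam-Napiórkowski \cite{NamNap-17d}. Define $\xi(t) := \Phi_N(t) - \Phi_{N,M}(t)$ and observe that $\xi(0) = 0$ since both evolutions start from $U_N(0)\Psi_N(0)$. The proof decomposes $\xi$ according to $\mathds{1}^{\leq M} + \mathds{1}^{>M} = 1$. The key structural observation is that the truncated generator $\mathds{1}^{\leq M}\mathcal{G}_N(t)\mathds{1}^{\leq M}$ vanishes on $\mathcal{F}^{>M}$, so the high-mode component of the truncated evolution is frozen in time: $\mathds{1}^{>M}\Phi_{N,M}(t) = \mathds{1}^{>M}\Phi_N(0)$.

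For the high-mode part $\mathds{1}^{>M}\xi$, the triangle inequality together with Markov's bound $\mathds{1}^{>M} \leq \mathcal{N}/M$ and the kinetic estimate (\ref{eq:bound_kinetic_true_state}) yield
\begin{equation*}
\|\mathds{1}^{>M}\xi(t)\|^2 \leq 2\|\mathds{1}^{>M}\Phi_N(t)\|^2 + 2\|\mathds{1}^{>M}\Phi_N(0)\|^2 \leq \frac{C(|e_N|+N)}{M}.
\end{equation*}
Interpolating this with the trivial bound $\|\xi\|\leq 2$ accounts for the $M^{-1/2}$ contribution in the target estimate.

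For the low-mode part, introduce $\widetilde{\Phi}(t) := \mathds{1}^{\leq M}\Phi_N(t)$. A direct computation gives
\begin{equation*}
i\partial_t\bigl(\widetilde{\Phi} - \mathds{1}^{\leq M}\Phi_{N,M}\bigr) = \mathds{1}^{\leq M}\mathcal{G}_N\mathds{1}^{\leq M}\bigl(\widetilde{\Phi} - \mathds{1}^{\leq M}\Phi_{N,M}\bigr) + \mathds{1}^{\leq M}\mathcal{G}_N(t)\mathds{1}^{>M}\Phi_N(t).
\end{equation*}
Since the first generator is self-adjoint, its contribution to $\frac{\dd}{\dd t}\|\widetilde{\Phi}-\mathds{1}^{\leq M}\Phi_{N,M}\|^2$ vanishes, so
\begin{equation*}
\|\mathds{1}^{\leq M}\xi(t)\| \leq \int_0^t \bigl\|\mathds{1}^{\leq M}\mathcal{G}_N(s)\mathds{1}^{>M}\Phi_N(s)\bigr\|\,\dd s.
\end{equation*}
Only the particle-number decreasing pieces of $\mathcal{G}_N = \mathds{1}^{\leq N}(\mathbb{H}+\mathcal{E}_N)\mathds{1}^{\leq N}$ can contribute: the pair-annihilation term $\frac{1}{2}\iint\overline{K_2}\,a_xa_y$ from $\mathbb{H}$ together with the adjoints of $R_1, R_2, R_3$ in $\mathcal{E}_N$. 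For each such operator $A$, I use Cauchy-Schwarz in the form
\begin{equation*}
\|\mathds{1}^{\leq M}A\mathds{1}^{>M}\Phi_N\|^2 \leq \bigl\|A^*\mathds{1}^{\leq M}A\mathds{1}^{>M}\Phi_N\bigr\|\,\bigl\|\mathds{1}^{>M}\Phi_N\bigr\|,
\end{equation*}
control the first factor by the operator bounds of \cref{lem:Bog_approx} and \cref{lem:Bog_lower_bound} combined with the kinetic estimate (\ref{eq:bound_kinetic_true_state}) on $\Phi_N$, and handle the second factor via Markov's inequality as above. The exponent $1/4$ on $|e_N|+N$ and the specific powers $N^{3(\beta+\varepsilon)/4}/M$ and $N^{(\beta+\varepsilon-1)/2}/M^{1/4}$ arise from distributing the $1/2$ kinetic-energy exponent between the two Cauchy-Schwarz factors and choosing the free parameter $\eta$ from \cref{lem:Bog_approx} optimally in each term.

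The main obstacle is the precise accounting of $M$- and $N$-dependent powers through the Cauchy-Schwarz manipulations. The matrix element $\mathds{1}^{\leq M}A\mathds{1}^{>M}$ is supported on a narrow band of $\mathcal{N}$-eigenspaces near $M$ (the width determined by how much the operator $A$ shifts the particle number), and carefully exploiting this localization is what produces the favorable $M^{-1}$ and $M^{-1/4}$ rates rather than the cruder $M^{-1/2}$ one would obtain by naive operator bounds. Once the integrand estimate is established, integration in $s$ from $0$ to $t$ and combination with the high-mode bound of the second paragraph gives the stated inequality, with the exponential time growth absorbed in the constant $C_{t,\varepsilon}$.
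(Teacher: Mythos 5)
There is a genuine gap, and it sits at the very first step of your decomposition. You bound the high-mode part by $\|\mathds{1}^{>M}\xi(t)\|^2 \leq 2\|\mathds{1}^{>M}\Phi_N(t)\|^2 + 2\|\mathds{1}^{>M}\Phi_N(0)\|^2$ and then invoke Markov together with (\ref{eq:bound_kinetic_true_state}). But (\ref{eq:bound_kinetic_true_state}) only gives $\langle\Phi_N(t),\mathcal{N}\Phi_N(t)\rangle \leq C(|e_N|+N)$, so Markov yields $\|\mathds{1}^{>M}\Phi_N(t)\|^2 \leq C(|e_N|+N)/M$, which with $M=N^{1-\delta}$ is at least of order $N^{\delta}$ — it diverges. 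No interpolation with the trivial bound $\|\xi\|\leq 2$ can convert a divergent bound into $M^{-1/2}$. The quantity $\|\mathds{1}^{>M}\Phi_N(t)\|$ is in fact small, but only as a \emph{consequence} of the lemma (via $\mathds{1}^{>M}\Phi_{N,M}$ being controlled by \cref{prop:truncated_stability}); trying to bound it a priori from the crude kinetic estimate on $\Phi_N$ is circular. The paper avoids this by writing $\|\xi\|^2 = 2(1-\Re\langle\Phi_N,\Phi_{N,M}\rangle)$ and placing the projector $\mathds{1}^{>m}$ on $\Phi_{N,M}$, which enjoys the much stronger bound $\langle\Phi_{N,M},\mathcal{N}\Phi_{N,M}\rangle \leq C_{t,\varepsilon}N^{\beta+\varepsilon}$.

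The low-mode part has a second, related problem: your Duhamel estimate requires controlling the \emph{vector norm} $\|\mathds{1}^{\leq M}\mathcal{G}_N(s)\mathds{1}^{>M}\Phi_N(s)\|$, but \cref{lem:Bog_approx} and \cref{lem:Bog_lower_bound} only provide quadratic-form (expectation) bounds; your subsequent Cauchy-Schwarz merely pushes the difficulty onto $\|A^*\mathds{1}^{\leq M}A\,\mathds{1}^{>M}\Phi_N\|$, a quartic expression that is even further out of reach, and it reintroduces the uncontrolled factor $\|\mathds{1}^{>M}\Phi_N\|$. The paper's formulation as the time derivative of the bilinear form $\langle\Phi_N(t),\mathds{1}^{\leq m}\Phi_{N,M}(t)\rangle$ is what permits splitting each cubic or pairing operator by Cauchy-Schwarz with one factor landing on each state, each controlled by its own kinetic bound. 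Finally, you correctly observe that the band localization of $\mathds{1}^{\leq M}A\,\mathds{1}^{>M}$ must be exploited to get the rates $M^{-1}$ and $M^{-1/4}$, but your setup with a single fixed cutoff gives no mechanism for this; the paper extracts the gain by averaging the commutator $[\mathcal{G}_N(t),\mathds{1}^{\leq m}]$ over $m\in[M/2,M-3]$, which telescopes the band indicators and produces the crucial extra factor of $M^{-1}$. As written, your argument does not close, and repairing it leads essentially back to the paper's inner-product-plus-averaging scheme.
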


\begin{proof}
The proof follows the one of \cite{NamNap-17d}, it differs in that it uses \cref{lem:three_body} to deal with three body terms and that one has to be a little bit more careful when estimating the two-body terms.

We have
\begin{equation*}
\|\Phi_N(t) - \Phi_{N,M}(t)\|_{L^2(\mathbb{R}^{3N})}^2 = 2\left(1 - \Re \langle \Phi_{N}(t), \Phi_{N,M}(t) \rangle \right).
\end{equation*}
Let $M/2 \leq m \leq M-3$ and decompose
\begin{equation*}
\langle \Phi_{N}(t), \Phi_{N,M}(t) \rangle = \langle \Phi_{N}(t),\mathds{1}^{\leq m} \Phi_{N,M}(t) \rangle +  \langle \Phi_{N}(t),\mathds{1}^{> m} \Phi_{N,M}(t) \rangle.
\end{equation*}
The second term is estimated using the Cauchy-Schwarz inequality
\begin{align}
	\label{eq:truncated_dyn_1}
|  \langle \Phi_{N}(t),\mathds{1}^{> m} \Phi_{N,M}(t) \rangle | 
& \leq \| \Phi_{N}(t)\|_{L^2(\mathbb{R}^{3N})} \|\mathds{1}^{> m} \Phi_{N,M}(t)\|_{L^2(\mathbb{R}^{3N})} \nn \\
&\leq \langle \Phi_{N,M}(t), (\mathcal{N}/m) \Phi_{N,M}(t) \rangle^{1/2} \nn \\
&\leq C_t M^{-1/2}.
\end{align}
We now want to prove that the first term remains close to $1$. To this aim we compute its time derivative
\begin{align*}
\frac{\dd }{\dd t} \langle \Phi_{N}(t),\mathds{1}^{\leq m} \Phi_{N,M}(t) \rangle  = \langle \Phi_{N}(t), i [\mathcal{G}_N(t),\mathds{1}^{\leq m}] \Phi_{N,M}(t) \rangle 
\end{align*}
and consider its average over the parameter $M/2 \leq m \leq M-3$
\begin{equation*}
\frac{1}{M/2 - 2} \sum_{m=M/2}^{M-3} \langle \Phi_{N}(t), i [\mathcal{G}_N(t),\mathds{1}^{\leq m}] \Phi_{N,M}(t) \rangle.
\end{equation*}
The gain obtained by averaging comes from the fact that the commutator $ [\mathcal{G}_N(t),\mathds{1}^{\leq m}] $ is localized in $\{ m-2 \leq \mathcal{N} \leq m+2\}$. As was shown in \cite{NamNap-17d} we have
\begin{equation*}
\sum_{m=M/2}^{M-3}  i [\mathcal{G}_N(t),\mathds{1}^{\leq m}] =  A_1 \chi_1(\mathcal{N})^2 + A_2 \chi_{2}(\mathcal{N})^2 + h.c. \, ,
\end{equation*}
where
\begin{align*}
A_1 &= \frac{i}{2} \iiiint (Q(t) \otimes Q(t) w_N Q(t) \otimes 1)(x,y,x',y') u(t,x) a^*_x a^*_{y} a_{y'}  \dd x \, \dd y \,\dd x' \,\dd y' \, \\
&\qquad \qquad \qquad \qquad \qquad 
- a^*(Q(t) [w\ast |u_N(t)u_N(t)|^2] u_N(t)) \mathcal{N} \\
& =:A_1^3 + A_1^1, \\
A_2 &= \frac{i}{2} \iint K_2(t,x,y)  a^*_x  a^*_y \dd x \, \dd y \,,
\end{align*}
and
\begin{align*}
\chi_1(\mathcal{N})^2 &=  \frac{\sqrt{N-\mathcal{N}}}{N-1} \mathds{1}(M/2 \leq \mathcal{N} \leq M-3), \\
\chi_2(\mathcal{N})^2 &= \frac{\sqrt{(N-\mathcal{N})(N-\mathcal{N}-1)}}{N-1}[\mathds{1}(M/2 -1 < \mathcal{N} \leq M-3) + \mathds{1}(M/2 \leq \mathcal{N} < M-3)].
\end{align*}
Note that since $\Phi_{N}(t), \Phi_{N,M}(t) \in \mathcal{F}(\mathcal{H}_+)$, we can replace $Q(t)$ by $1$ in the expression of the quantities $\langle \Phi_N(t), A_i^j \chi_i(\mathcal{N})\Phi_{N,M}(t)\rangle$. We have
\begin{align*}
\Bigg| \langle \Phi_N(t) , &\Big( A_1^3 \chi_1(\mathcal{N})^2 + h.c. \Big) \Phi_{N,M}(t) \rangle \Bigg|  \\
&= \frac{1}{2} \left|  \langle  \Phi_N(t), \left(\int iw_N(x - y)u(t,x) \chi_1(\mathcal{N}-1)a^*_x a^*_y a_y \chi_1(\mathcal{N})  \dd x \, \dd y + h.c. \right)\Phi_{N,M}(t) \rangle \right| \\
&\leq   CN^{-1/2} \langle \Phi_N(t), \|w_N\|_{L^{6/5}(\mathbb{R}^3)} \mathds{1}^{\geq M/2+1} \mathds{1}^{\leq M-2} \left(\eta_1 \mathcal{N} + \eta_1^{-1}\mathcal{N}\dd \Gamma(1-\Delta)\right)\Phi_N(t)\rangle^{\tfrac{1}{2}} \\
&\qquad 
	  \times \langle \Phi_{N,M}(t), \|w_N\|_{L^{6/5}(\mathbb{R}^3)} \mathds{1}^{\geq M/2+1} \mathds{1}^{\leq M-2}\left(\eta_2 \mathcal{N} + \eta_2^{-1}\mathcal{N} \dd \Gamma(1-\Delta) \right)\Phi_{N,M}(t)\rangle^{\tfrac{1}{2}} \\
&\leq C_{t,\varepsilon} N^{-1/2} \|w \|_{L^{6/5}(\mathbb{R}^3)}  \Big\{ N^{\beta/2} \left(\eta_1 M +\eta_1^{-1}M (|e_N|+N) \right) \times \\
& \qquad \qquad \qquad \qquad \qquad \qquad \qquad \qquad \qquad 
	\times N^{\beta/2} \left(\eta_2 N^{\beta+\varepsilon} +\eta_2^{-1}M N^{\beta+\varepsilon}\right)\Big\}^{\tfrac{1}{2}} \\
& \leq C_{t,\varepsilon} N^{-1/2} N^{\beta + \varepsilon/2} M^{3/4} (|e_N|+N)^{1/4}.
\end{align*} 
We have used that $$\langle \Phi_{N,M}(t), \mathcal{N} \Phi_{N,M}(t) \rangle \leq \langle \Phi_{N,M}(t), \dd \Gamma (1-\Delta) \Phi_{N,M}(t) \rangle \leq C_{t,\varepsilon} N^{\beta + \varepsilon}$$ and the estimate (\ref{eq:bound_kinetic_true_state}). Next we have
\begin{align*}
\Bigg| \langle &\Phi_N(t) , A_1^1 \chi_1(\mathcal{N})^2 \Phi_{N,M}(t) \rangle \Bigg| \\
&= \Bigg| \langle \Phi_N(t) , \chi_1(\mathcal{N}-1) a^*(Q(t) [w\ast |u_N(t)|^2] u_N(t)) \mathcal{N} \chi_1(\mathcal{N}) \Phi_{N,M}(t) \rangle \Bigg| \\
&\leq CN^{-1/2} \langle \Phi_N(t) , \mathds{1}^{\leq M+1} a^*(Q(t) [w\ast |u_N(t)|^2] u_N(t)) a(Q(t) [w\ast |u_N(t)|^2] u_N(t))  \Phi_N(t)\rangle^{1/2} \times \\
&\quad \times \langle \Phi_{N,M}(t) ,\mathds{1}^{\leq M} \mathcal{N}^2 \Phi_{N,M}(t)\rangle^{1/2} \\
&\leq  C \|[w_N \ast |u_N(t)|^2] u_N(t)\|_{L^2(\mathbb{R}^3)} N^{-1/2} \langle \Phi_N(t) , \mathds{1}^{\leq M} \mathcal{N}  \Phi_N(t)\rangle^{1/2} 
\langle \Phi_{N,M}(t) ,\mathds{1}^{\leq M} \mathcal{N}^2 \Phi_{N,M}(t)\rangle^{1/2} \\
&\leq C_{t,\varepsilon} N^{-1/2} N^{(\beta+\varepsilon)/2} M.
\end{align*}
The term with $\langle \Phi_N(t) ,  \chi_1(\mathcal{N})^2 (A_1^1)^* \Phi_{N,M}(t) \rangle$ is dealt with similarly. Finally, we apply \cref{lem:Bog_lower_bound} as well as the Cauchy-Schwarz inequality to bound the last term,
\begin{align}
	\label{eq:truncated_dyn_11}
\Bigg| &\langle \Phi_N(t) , (A_2 \chi_2(\mathcal{N})^2 +h.c. ) \Phi_{N,M}(t) \rangle \Bigg| \nn \\
&= \Bigg| \langle \Phi_N(t) , \left(\iint iw_N(x-y) u(x) u(y) a^*_x a^*_y \chi_2(\mathcal{N})^2 \dd x \,\dd y + h.c.\right) \Phi_{N,M}(t) \rangle \Bigg|   \nn\\
&\leq  \left( \eta \|\chi_2(\mathcal{N})\|_{op}^2 \|(1-\Delta_x)^{-1/2} u_N(t) w_N \ast(u_N(t) \cdot)\|_{\mathfrak{S}_2}^2 + \eta^{-1} \langle \Phi_N(t)\dd \Gamma(1-\Delta) \Phi_N(t) \rangle\right)^{1/2}  \times \nn\\
&\quad \times  \left( \eta' \|\chi_2(\mathcal{N})\|_{op}^2 \|(1-\Delta_x)^{-1/2} u_N(t) w_N \ast(u_N(t) \cdot)\|_{\mathfrak{S}_2}^2 + (\eta')^{-1}\langle \Phi_{N,M}(t)\dd \Gamma(1-\Delta) \Phi_{N,M}(t) \rangle\right)^{1/2} \nn\\
&\leq \left(\eta  N^{\beta+\varepsilon} + \eta^{-1} \langle \Phi_N(t), \dd \Gamma(1-\Delta)\Phi_N(t) \rangle  \right)^{1/2} \times\nn \\
&\qquad \qquad \qquad \qquad\qquad
\times \left(\eta' N^{\beta+\varepsilon} + (\eta')^{-1} \langle \Phi_{N,M}(t),\dd \Gamma(1-\Delta)\Phi_{N,M}(t) \rangle  \right)^{1/2}.
\end{align}
Now we use again that $\langle \Phi_{N,M}(t), \dd \Gamma(1-\Delta)\Phi_{N,M}(t) \rangle \leq C_{t,\varepsilon} N^{\beta +\varepsilon}$ and that $\langle \Phi_{N}(t), \dd \Gamma(1-\Delta)\Phi_{N}(t) \rangle \leq C (|e_N| + N)$. After optimizing over $\eta$ and $\eta'$ we obtain
\begin{equation*}
(\ref{eq:truncated_dyn_11}) \leq C_{t,\varepsilon} \frac{(|e_N| + N)^{1/4}N^{3(\beta+\varepsilon)/4}}{M}.
\end{equation*}
Hence we have shown that
\begin{multline}
		\label{eq:truncated_dyn_2}
\Bigg | \frac{\dd }{\dd t} \left(\frac{1}{M/2 - 2} \sum_{m=M/2}^{M-3}  \langle \Phi_{N}(t),\mathds{1}^{\leq m} \Phi_{N,M}(t) \rangle\right) \Bigg | \\
\leq C_{t,\varepsilon} \left(\frac{(|e_N| + N)^{1/4}N^{3(\beta+\varepsilon)/4}}{M} +  \frac{(|e_N|+N)^{1/4} N^{(\beta + \varepsilon-1)/2}}{M^{1/4}} + N^{(\beta+\varepsilon - 1)/2}\right).
\end{multline}
On the other hand, recall that $\Phi_{N,M}(0) = \Phi_{N}(0) = \Phi(0)$, so that for $M/2 \leq m \leq M-3$,
\begin{align}
		\label{eq:truncated_dyn_3}
 \langle \Phi_{N}(0),\mathds{1}^{\leq m} \Phi_{N,M}(0)  \rangle &=\langle  \Phi(0),\mathds{1}^{\leq m} \Phi (0) \rangle \nn \\
& = 1 - \langle  \Phi(0),\mathds{1}^{> m} \Phi (0) \rangle \nn \\
&\geq 1 - \langle  \Phi(0),\mathds{1}^{> m} (\mathcal{N}/m) \Phi (0) \rangle  \nn \\
&\geq 1 - C M^{-1}.
\end{align}
Gathering (\ref{eq:truncated_dyn_1}), (\ref{eq:truncated_dyn_2}) (\ref{eq:truncated_dyn_3}) we obtain
\begin{multline*}
\|\Phi_N(t) - \Phi_{N,M}(t)\|_{L^2(\mathbb{R}^{3N})}^2 \\ \leq C_{t,\varepsilon} \left(\frac{1}{M^{1/2}} + \frac{(|e_N| + N)^{1/4}N^{3(\beta+\varepsilon)/4}}{M} + \frac{(|e_N|+N)^{1/4} N^{(\beta + \varepsilon-1)/2}}{M^{1/4}} \right).
\end{multline*}
\end{proof}

As in \cite{NamNap-17d} we compare the Bogoliubov dynamics and the truncated one.
\begin{lem}
	\label{lem:bogo_trunc_dyn}
Let $M= N^{\alpha}$ with $0 < \alpha < 1 - \beta$, let $N$ be large enough, then we have
\begin{equation*}
\|\Phi(t) - \Phi_{N,M}(t)\|_{L^2(\mathbb{R}^{3N})}^2 \leq C_{t,\varepsilon} \left(\frac{1}{M^{1/2}}  + \frac{N^{\beta+\varepsilon}}{M} + \frac{M}{N^{1-2\beta - \varepsilon}} + \frac{M}{N^{(1-2\beta - 2 \varepsilon)/2}}\right).
\end{equation*}
\end{lem}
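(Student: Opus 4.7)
The plan is to follow the structure of \cref{lem:truncated_dyn}. Setting $g(t) := \langle\Phi(t), \Phi_{N,M}(t)\rangle$, I note that $\Phi(0) = \Phi_{N,M}(0)$ so $g(0) = 1$, and it then suffices to control $|1 - \Re g(t)|$ to recover $\|\Phi(t) - \Phi_{N,M}(t)\|^2 = 2(1 - \Re g(t))$. For $M/2 \leq m \leq M-3$ I will split
\begin{equation*}
g(t) = \langle\Phi(t), \mathds{1}^{\leq m}\Phi_{N,M}(t)\rangle + \langle\Phi(t), \mathds{1}^{>m}\Phi_{N,M}(t)\rangle,
\end{equation*}
and bound the tail by Cauchy--Schwarz and Markov's inequality $\mathds{1}^{>m} \leq \mathcal{N}/m$ combined with $\mathcal{N} \leq \dd\Gamma(1-\Delta)$ and \cref{prop:truncated_stability}:
\begin{equation*}
|\langle\Phi(t), \mathds{1}^{>m}\Phi_{N,M}(t)\rangle| \leq \langle\Phi_{N,M}(t), (\mathcal{N}/m)\Phi_{N,M}(t)\rangle^{1/2} \leq (C_{t,\varepsilon} N^{\beta+\varepsilon}/M)^{1/2}.
\end{equation*}
Upon squaring this gives the $N^{\beta+\varepsilon}/M$ term in the statement; combining with the cruder $\|\mathds{1}^{>m}\Phi_{N,M}\| \leq 1$ yields the $M^{-1/2}$ piece.

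For the low-$\mathcal{N}$ part, using $\mathds{1}^{\leq m}\mathds{1}^{\leq M} = \mathds{1}^{\leq m}$ together with the evolution equations for $\Phi$ and $\Phi_{N,M}$ gives
\begin{equation*}
\frac{\dd}{\dd t}\langle\Phi(t), \mathds{1}^{\leq m}\Phi_{N,M}(t)\rangle = i\langle\Phi(t), [\mathbb{H}(t), \mathds{1}^{\leq m}]\Phi_{N,M}(t)\rangle - i\langle\Phi(t), \mathds{1}^{\leq m}\mathcal{E}_N(t)\Phi_{N,M}(t)\rangle.
\end{equation*}
As in \cref{lem:truncated_dyn} I will average over $m \in \{M/2,\dots,M-3\}$ and divide by $M/2-2$. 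The averaged commutator reduces to something of the form $M^{-1}\iint K_2(t,x,y) a^*_x a^*_y \widetilde\chi(\mathcal{N})\, \dd x\, \dd y + \mathrm{h.c.}$ with $\widetilde\chi$ supported in a band of width $\sim M$ near $\mathcal{N}\sim M$, since $\dd\Gamma(h)$ commutes with $\mathcal{N}$ and only the pair-creation component of $\mathbb{H}$ survives. Its bilinear form on $(\Phi, \Phi_{N,M})$ is then controlled exactly as in the estimate leading to (\ref{eq:truncated_dyn_11}), through \cref{lem:Bog_lower_bound} and a polarization argument, using the kinetic bounds $\langle\Phi, \dd\Gamma(1-\Delta)\Phi\rangle,\ \langle\Phi_{N,M}, \dd\Gamma(1-\Delta)\Phi_{N,M}\rangle \leq C_{t,\varepsilon} N^{\beta+\varepsilon}$ supplied by \cref{prop:Bog_stability} and \cref{prop:truncated_stability}. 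After optimization in the auxiliary parameter this contributes $C_{t,\varepsilon} N^{\beta+\varepsilon}/M$, integrated in time, to the final estimate.

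The genuinely new piece, compared with \cref{lem:truncated_dyn}, is the error term $\mathds{1}^{\leq m}\mathcal{E}_N(t)$: because it is \emph{not} a commutator, averaging over $m$ simply produces a smooth cutoff $\widetilde\chi(\mathcal{N})\mathcal{E}_N$ and does not yield any localization near $\mathcal{N}\sim M$. I plan to decompose $\mathcal{E}_N = \frac{1}{2}\sum_{j=0}^4(R_j + R_j^*)$ and recycle the operator bounds obtained in the proof of \cref{lem:Bog_approx}, converting each $\pm R_j \leq B_j$ into a bilinear estimate through the polarization inequality $|\langle\Phi, X\Phi_{N,M}\rangle| \leq \langle\Phi, B\Phi\rangle + \langle\Phi_{N,M}, B\Phi_{N,M}\rangle$, valid for self-adjoint $X$ with $\pm X \leq B$. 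The binding contribution comes from the normal-ordered four-body piece: Step 5 of the proof of \cref{lem:Bog_approx} gives $\pm R_4 \leq C(\mathcal{N}/N^{1-\beta})\dd\Gamma(Q(1-\Delta)Q)$, so using $\mathcal{N} \leq M$ on $\mathcal{F}^{\leq M}$ together with the kinetic bound $C_{t,\varepsilon}N^{\beta+\varepsilon}$ produces the $M/N^{1-2\beta-\varepsilon}$ term. The pair-creation piece $R_2$, the three-body piece $R_3$ and the simpler $R_0, R_1$ are controlled by \cref{lem:Bog_lower_bound} and \cref{lem:three_body}, and after optimizing the auxiliary parameters deliver the remaining $M/N^{(1-2\beta-2\varepsilon)/2}$ contribution.

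The main obstacle I expect is bookkeeping rather than any new idea: each $R_j$ shifts $\mathcal{N}$ by some amount in $\{0,\pm 1,\pm 2\}$, so $\mathds{1}^{\leq m}\mathcal{E}_N$ connects subspaces with slightly mismatched upper particle numbers, and each polarization and Cauchy--Schwarz step has to be arranged so that the correct kinetic bound is applied to the correct state. Once the pointwise-in-time derivative estimate has been obtained, a Grönwall-type integration (absorbing all factors of $t$ into $C_{t,\varepsilon}$ via \cref{prop:regularity_hartree}) combined with the tail and initial estimates yields the claimed inequality.
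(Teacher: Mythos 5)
Your proposal follows essentially the same route as the paper's proof: the same splitting of $\langle\Phi,\Phi_{N,M}\rangle$ into $\mathds{1}^{\leq m}$ and $\mathds{1}^{>m}$ parts averaged over $m\in[M/2,M-3]$, the same identification of the time derivative as a commutator $[\mathbb{H},\mathds{1}^{\leq m}]$ plus the error $\mathds{1}^{\leq m}\mathcal{E}_N$, and the same inputs (\cref{lem:Bog_lower_bound}, the operator bounds behind \cref{lem:Bog_approx}, and the kinetic estimates of \cref{prop:truncated_stability,prop:Bog_stability}); the paper simply cites \cref{lem:Bog_approx} wholesale where you re-derive the $R_j$ bounds in bilinear form. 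One small correction: the tail term is not obtained ``upon squaring'' (the deficit $1-\Re\, g$ enters the squared norm linearly) but by applying Markov's inequality to \emph{both} factors, $|\langle\Phi,\mathds{1}^{>m}\Phi_{N,M}\rangle|\leq \langle\Phi,(\mathcal{N}/m)\Phi\rangle^{1/2}\langle\Phi_{N,M},(\mathcal{N}/m)\Phi_{N,M}\rangle^{1/2}$, which directly yields $N^{\beta+\varepsilon}/M$ as in the paper.
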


\begin{proof}
The proof is similar as the one in \cite{NamNap-17d} except that we use the estimates of \cref{lem:Bog_approx}. As before, we have
\begin{equation}
	\label{eq:Bog_trunc_dyn_0}
\|\Phi (t) - \Phi_{N,M}(t)\|_{L^2(\mathbb{R}^{3N})}^2 \leq 2\left(1 - \Re \langle \Phi_{N,M} (t), \Phi(t) \rangle \right).
\end{equation}
We let $M/2 \leq m \leq M-3$ and decompose
\begin{equation*}
\langle \Phi_{N,M} (t), \Phi (t) \rangle = \langle \Phi_{N,M} (t),\mathds{1}^{\leq m} \Phi (t) \rangle +  \langle \Phi_{N,M}(t),\mathds{1}^{> m} \Phi (t) \rangle.
\end{equation*}
The second term is bounded by the Cauchy-Schwarz inequality
\begin{align}
	\label{eq:Bog_trunc_dyn_1}
|  \langle \Phi_{N,M} (t),\mathds{1}^{> m} \Phi (t) \rangle | 
& \leq \|\mathds{1}^{> m} \Phi_{N,M} (t)\|  \|\mathds{1}^{> m} \Phi (t)\| \nn \\
&\leq \langle \Phi_{N,M} (t), (\mathcal{N}/m) \Phi_{N,M}(t) \rangle^{1/2}  \times \langle \Phi (t), (\mathcal{N}/m) \Phi (t) \rangle^{1/2} \nn \\
&\leq C_t M^{-1}.
\end{align}
As in the proof of \cref{lem:truncated_dyn}, we will show that the first term remains close to $1$, we compute its time derivative
\begin{align*}
\frac{\dd }{\dd t} \langle \Phi_{N,M}(t),\mathds{1}^{\leq m} \Phi(t) \rangle  
= i \langle \Phi_{N,M}(t), \left((\mathcal{G}_N(t) - \mathbb{H})\mathds{1}^{\leq m} +  i [\mathbb{H},\mathds{1}^{\leq m}]\right) \Phi(t) \rangle.
\end{align*}
The first term is estimated using \cref{lem:Bog_approx} and the Cauchy-Schwarz inequality, we obtain
\begin{align}
		\label{eq:Bog_trunc_dyn_2}
\Bigg| \langle &\Phi_{N,M}(t),  (\mathcal{G}_N(t) - \mathbb{H})\mathds{1}^{\leq m} \Phi(t) \rangle \Bigg| \nn \\
&= \Bigg| \langle \Phi_{N,M}(t), \mathds{1}^{\leq M} (\mathcal{G}_N(t) - \mathbb{H}) \mathds{1}^{\leq M} \mathds{1}^{\leq m} \Phi (t) \rangle \Bigg| \nn \\
&\leq C_{t,\varepsilon} \Big\langle \Phi(t), \sqrt{\frac{M}{N^{1-\beta-\varepsilon}}} \dd \Gamma(1-\Delta)) \Phi(t) \Big\rangle^{1/2}  \times \Big\langle \Phi_{N,M}(t), \sqrt{\frac{M}{N^{1-\beta-\varepsilon}}} \dd \Gamma(1-\Delta)) \Phi_{N,M}(t) \Big\rangle^{1/2} \nn \\
&\leq C_{t,\varepsilon} \sqrt{\frac{M}{N^{1-3\beta-3\varepsilon}}},
\end{align}
where we used \cref{prop:truncated_stability}, \cref{prop:Bog_stability}.
For the second term, the same computations as in \cref{lem:truncated_dyn} show that
\begin{align}
		\label{eq:Bog_trunc_dyn_3}
\Bigg| \frac{1}{M/2 - 2} \sum_{m=M/2}^{M-3} \langle &\Phi_{N}(t), i [\mathbb{H} (t),\mathds{1}^{\leq m}] \Phi (t) \rangle \Bigg| \leq C_{t,\varepsilon} \frac{N^{\beta+\varepsilon}}{M}
\end{align}
where we have used that $\langle \Phi (t),  \dd \Gamma(1-\Delta)  \Phi (t) \rangle\leq C_{t,\varepsilon} N^{\beta + \varepsilon}$. On the other hand, as in \cref{lem:truncated_dyn} we have
\begin{align}
		\label{eq:Bog_trunc_dyn_4}
 \langle \Phi_{N,M}(0),\mathds{1}^{\leq m} \Phi(0)  \rangle &=\langle  \Phi(0),\mathds{1}^{\leq m} \Phi (0) \rangle \geq 1 - C M^{-1}.
\end{align}

Gathering (\ref{eq:Bog_trunc_dyn_2}),(\ref{eq:Bog_trunc_dyn_3}) and (\ref{eq:Bog_trunc_dyn_4}) we obtain
\begin{equation*}
\Re \langle \Phi_{N,M}(t), \mathds{1}^{\leq m} \Phi(t) \rangle \geq 1 - C_{t,\varepsilon}\left( M^{-1} + \frac{N^{\beta+\varepsilon}}{M} + \sqrt{\frac{M}{N^{1-3\beta-3\varepsilon}}}\right).
\end{equation*}
Together with (\ref{eq:Bog_trunc_dyn_0}) and (\ref{eq:Bog_trunc_dyn_1}), this concludes the proof.
\end{proof}

\subsection{Proof of \cref{theo_1}}
\subsubsection*{Proof of 1)}
The triangle inequality and Lemmas \ref{lem:truncated_dyn} and \ref{lem:bogo_trunc_dyn} give, for all $\varepsilon >0$
\begin{align*}
\|\Phi_N(t) - \Phi(t) \|_{L^2(\mathbb{R}^{3N})} &\leq \|\Phi_N(t) - \Phi_{N,M}(t) \|_{L^2(\mathbb{R}^{3N})} + \|\Phi_{N,M}(t) - \Phi(t) \|_{L^2(\mathbb{R}^{3N})} \\
&\leq C_{t,\varepsilon} \Bigg( \frac{1}{M^{1/2}} + (|e_N| + N)^{1/4}\left( \frac{N^{3(\beta+\varepsilon)/4}}{M}+ \frac{N^{\beta + \varepsilon/2-1/2}}{M^{1/4}}\right)\\
&\qquad +\frac{N^{\beta+\varepsilon}}{M}  + \sqrt{\frac{M}{N^{1-3(\beta+\varepsilon)}}}\Bigg).
\end{align*}
Using that $|e_N| \leq N^{3\beta +1}$ and taking $M = N^{1/2}$ and $\varepsilon $ small enough we obtain
\begin{equation*}
\|\Phi_N(t) - \Phi(t) \|_{L^2(\mathbb{R}^{3N})} \leq C_{t,\varepsilon} \left( N^{-(6\beta - 1)/4 + \varepsilon} + N^{-(7\beta - 2)/4 + \varepsilon}\right),
\end{equation*}
for any $\varepsilon >0$ small enough.

\subsubsection*{Proof of 2)}
Using \cref{lem:truncated_dyn} and \cref{prop:truncated_stability} we have
\begin{align*}
\langle \Phi_N(t), \frac{\mathcal{N}}{N} \Phi_N(t) \rangle 
&= \langle \Phi_N(t), \frac{\mathcal{N}}{N} (\Phi_N(t)-\Phi_{N,M}(t)) \rangle + \langle \Phi_N(t), \frac{\mathcal{N}}{N} \Phi_{N,M}(t) \rangle \\
&\leq \| \Phi_N(t)\|_{L^2(\mathbb{R}^{3N})} \left( \|\Phi_N(t)-\Phi_{N,M}(t)\|_{L^2(\mathbb{R}^{3N})} +  \langle \Phi_{N,M}(t), \frac{\mathcal{N}}{N} \Phi_{N,M}(t) \rangle \right) \\
&\leq C_{t,\varepsilon} \Bigg(\frac{1}{M^{1/2}} + (|e_N| + N)^{1/4}\Bigg( \frac{N^{3(\beta+\varepsilon)/4}}{M}+ \frac{N^{\beta + \varepsilon/2-1/2}}{M^{1/4}}\Bigg)\\
&\qquad \qquad +  N^{-1} \langle \Phi_{N,M}(t), \dd \Gamma(1-\Delta) \Phi_{N,M}(t) \rangle \Bigg) \\
&\leq C_{t,\varepsilon} \left(\frac{1}{M^{1/2}} + \Bigg( \frac{N^{3\beta/2+3\varepsilon/4 + 1/4}}{M}+ \frac{N^{7\beta/4 + \varepsilon/2-1/4}}{M^{1/4}}\Bigg)+  N^{\beta-1}\right).
\end{align*}
Taking $M=N^{\alpha}$ with $\beta < \alpha < 1 - \beta $ we obtain after optimizing over $\alpha$
\begin{align*}
\|\Gamma^{(1)}_{\Psi_N(t)} - \ket{u_N(t)} \bra{u_N(t)}\|_{\mathfrak{S}_1} 
&\leq \langle \Phi_N(t), \frac{\mathcal{N}}{N} \Phi_N(t) \rangle  \\
&\leq C_{t,\varepsilon} \left(N^{-(3 -10\beta )/4 + \varepsilon} + N^{-(1-4\beta )/4 + \varepsilon} \right),
\end{align*}
for any $\varepsilon >0$ small enough.

\subsubsection*{Proof of 3)}
Using that when $\widehat{w} \geq 0$ we have the bound $|e_N| \leq C\left( N + N^{3\beta} \right)$, the same computation as before with $M = N^{-(1-\beta)/2}$ and assuming $1/6 < \beta < 1/5$ shows that
\begin{align*}
\|\Phi_N(t) - \Phi(t) \|_{L^2(\mathbb{R}^{3N})} \leq C_{t,\varepsilon} N^{-(1 - 5 \beta )/4 + \varepsilon},
\end{align*}
for any $\varepsilon >0$ small enough. Now take $M=N^{\alpha}$ with $\beta < \alpha < 1-\beta$, with the same computation as before, after optimizing over $\alpha$, we obtain
\begin{align*}
\|\Gamma^{(1)}_{\Psi_N(t)} - \ket{u_N(t)} \bra{u_N(t)}\|_{\mathfrak{S}_1} 
&\leq \langle \Phi_N(t), \frac{\mathcal{N}}{N} \Phi_N(t) \rangle  \\
&\leq C_{t,\varepsilon} \left(N^{-(1-\beta)/2 + \varepsilon} + N^{-(2-5\beta)/2 + \varepsilon} + N^{-(3 - 8\beta)/4 +\varepsilon}\right),
\end{align*}
for any $\varepsilon >0$. \qed


\bibliographystyle{siam} 
\bibliography{biblio} 
\end{document}